 \definecolor{BLACK}{gray}{0}
 \definecolor{WHITE}{gray}{1}
 \definecolor{RED}{rgb}{1,0,0}
 \definecolor{GREEN}{rgb}{0,1,0}
 \definecolor{BLUE}{rgb}{0,0,1}
 \definecolor{CYAN}{cmyk}{1,0,0,0}
 \definecolor{MAGENTA}{cmyk}{0,1,0,0}
 \definecolor{YELLOW}{cmyk}{0,0,1,0}
\theoremstyle{plain}
\newtheorem{thm}{Theorem}  
 \newtheorem{lem}{Lemma}
\begin{document}

\title{Entanglement sharing through noisy qubit channels: One-shot optimal
singlet fraction }

\author{Rajarshi Pal }

\email{rajarshi@imsc.res.in}

\affiliation{Optics \& Quantum Information Group, The Institute of Mathematical
Sciences, C. I. T. Campus, Taramani, Chennai 600113, India}

\author{Somshubhro Bandyopadhyay}

\email{som@jcbose.ac.in}

\affiliation{Department of Physics and Center for Astroparticle Physics and Space
Science, Bose Institute, Block EN, Sector V, Bidhan Nagar, Kolkata
700091, India}

\author{Sibasish Ghosh}

\email{sibasish@imsc.res.in}

\affiliation{Optics \& Quantum Information Group, The Institute of Mathematical
Sciences, C. I. T. Campus, Taramani, Chennai 600113, India }
\begin{abstract}
Maximally entangled states--a resource for quantum information processing--can
only be shared through noiseless quantum channels, whereas in practice
channels are noisy. Here we ask: Given a noisy quantum channel, what
is the maximum attainable purity (measured by singlet fraction) of
shared entanglement for single channel use and local trace preserving
operations? We find an exact formula of the maximum singlet fraction
attainable for a qubit channel and give an explicit protocol to achieve
the optimal value. The protocol distinguishes between unital and nonunital
channels and requires no local post-processing. In particular, the
optimal singlet fraction is achieved by transmitting part of an appropriate
pure entangled state, which is maximally entangled if and only if
the channel is unital. A linear function of the optimal singlet fraction
is also shown to be an upper bound on the distillable entanglement
of the mixed state dual to the channel. 
\end{abstract}
\pacs{03.67.Hk, 03.65.Ud, 03.67.Mn}

\maketitle
\section{ Introduction:} \emph{Shared} entanglement between two separated
observers (Alice and Bob) is a critical resource for quantum information
processing (QIP) tasks such as dense coding \cite{dense-coding},
cryptography \cite{Ekert91}, distributed quantum computation \cite{distributed-comp},
and quantum teleportation \cite{teleportation}. Faithful implementation
of QIP tasks require maximally entangled states, which can only be
shared through noiseless quantum channels, where Alice prepares a
maximally entangled state of two particles (say, qubits) and sends
one of them to Bob through the channel. In practice, available channels
are noisy resulting in mixed states. Entanglement distillation \cite{distillation-1,distillation-2,distillation-3,Distillation04,fidelity}
provides a solution by converting these mixed states to fewer almost-perfect
entangled states of purity close to unity while requiring many uses
of the channel and joint measurements on many copies of the output.
Clearly, the yield in an entanglement distillation protocol depends
on the purity of the mixed states, which in turn is a function of
the amount of noise present in the quantum channel. Thus, in the simplest
case of entanglement sharing, a basic question is: Given a noisy quantum
channel what is the maximum achievable purity for single use of the
channel? 

In this work, we answer the above question for qubit channels within
the paradigm of trace-preserving local operations (TP-LOCC). By restricting
to this class of operations, where {no subsystem is thrown away,
our results provide the conditions and an explicit protocol when every
single use of the channel is maximally efficient. Our result also
characterizes qubit channels by quantifying reliable transmission
of quantum information via teleportation for single channel use and
TP-LOCC. 

In the simplest scenario, the general protocol of sharing entanglement
works as follows: Alice prepares a bipartite pure entangled state
$\vert\psi\rangle$ and sends one half of it to Bob through a quantum
channel, say $\Lambda$ (which, throughout the present paper,
is assumed to be nonentanglement breaking). This results, in general, in a mixed entangled
state $\rho_{\psi,\Lambda}=(I\otimes\Lambda)\rho_{\psi}$, where $\rho_{\psi}=\vert\psi\rangle\langle\psi\vert$.
The \emph{purity} of this state is characterized by its singlet fraction
\cite{fidelity,distillation-1,distillation-3,F<=1/2} defined as:
\begin{eqnarray}
F\left(\rho_{\psi,\Lambda}\right) & = & \max_{\vert\Phi\rangle}\langle\Phi\vert\rho_{\psi,\Lambda}\vert\Phi\rangle,\label{fidelity}\end{eqnarray}
 where $|\Phi\rangle$ is a maximally entangled state. The singlet
fraction quantifies how close the state $\rho_{\psi,\Lambda}$ is
to a maximally entangled state, and therefore how useful the state
is for QIP tasks. For example, it is related to the teleportation
fidelity $f$ for teleportation of a qudit via the following relation:
\begin{eqnarray}
f\left(\rho_{\psi,\Lambda}\right) & = & \frac{dF\left(\rho_{\psi,\Lambda}\right)+1}{d+1}\label{tel-fidelity}\end{eqnarray}

In this work we are interested in the \emph{optimal singlet fraction}
for the channel $\Lambda$ defined as : \begin{eqnarray}
F\left(\Lambda\right) & = & \max_{\vert\psi\rangle}\max_{\mbox{L}}F\left(L(\rho_{\psi,\Lambda})\right),\label{optimal-fidelity-0}\end{eqnarray}
 where the maximum is taken over all pure state transmissions and
trace preserving LOCCs $L$. Note that, by virtue of Eq.$\,$(\ref{tel-fidelity})
$F\left(\Lambda\right)$ also quantifies reliable transmission of
quantum states via teleportation, albeit for single channel use, where
the optimal teleportation fidelity for the channel is expressed as
$f\left(\Lambda\right)=\frac{dF\left(\Lambda\right)+1}{d+1}$. This
is in contrast with the known measures such as, channel fidelity \cite{fidelity},
which quantifies, on an average, how close the output state is to
the input state, and entanglement fidelity \cite{Schumacher-1996,BNS-98},
which captures how well the channel preserves entanglement \cite{entanglement}
of the transmitted system with other systems.

For qubit channels such as depolarizing \cite{fidelity} and amplitude
damping \cite{BG-2012} the value of $F\left(\Lambda\right)$ is known,
but no general expression has been found yet for a generic qubit channel.
In this work, we obtain an exact formula of $F\left(\Lambda\right)$
for a qubit channel and give an explicit protocol to achieve this
value. Surprisingly, we also find that to attain the optimal value
no local post processing is required, even though it is known that
local post-processing can increase the singlet fraction of a state.
In particular, we show that the optimal value is attained by sending
part of a maximally entangled state down the channel if and only if
the channel is unital. This means that for nonunital channels one
must necessarily transmit part of an appropriate nonmaximally entangled
state. We also prove that the optimal singlet fraction is equal to
a linear function of the negativity \cite{F<=1/2} of the mixed state $\rho_{\Phi^{+},\Lambda}$,
where $|\Phi^{+}\rangle=\frac{1}{\sqrt{2}}\left(|00\rangle+|11\rangle\right)$.
Thus a linear function of $F(\Lambda)$ is an upper bound on the distillable
entanglement of the mixed state $\rho_{\Phi^{+},\Lambda}$. 

Let us note a couple of implications of our results. As noted earlier,
an entanglement distillation \cite{distillation-1,distillation-2,distillation-3,Distillation04,fidelity}
protocol uses many copies of the mixed state $\rho_{\psi,\Lambda}$(for
some transmitted pure state $\vert\psi\rangle$) of purity $F\left(\rho_{\psi,\Lambda}\right)$
and converts them to a fewer number of near-perfect entangled states
of purity close to unity. Following the prescription in this paper,
for a given noisy qubit channel Alice and Bob can now prepare states
with maximum achievable purity for each channel use so as to maximize
the yield in their distillation protocol. Second, by virtue of Eq.$\,$(\ref{tel-fidelity})
we are able to provide the optimal teleportation fidelity for any
qubit channel, albeit for single channel use.

The paper is organized as follows: in section II we provide an analytical expression  for the optimal singlet fraction of any qubit channel and  a recipe for  obtaining the optimal value by sharing  a pure entangled state across
the channel. We also prove that this pure entangled state is maximally entangled if and only if the channel is unital. In section III we relate the optimal singlet fraction with the maximum output negativity of a state that can be shared across the channel.
In section IV we show that  for a non-unital qubit channel the singlet fraction obtained by post-processing the output of a maximally entangled state is strictly less than the optimal value. We conclude in section V.

\section{Optimal singlet fraction for qubit channels.}
\subsection{Preliminaries}
%\emph{Preliminaries} 
A quantum channel $\Lambda$ is a trace preserving
completely positive map characterized by a set of Kraus operators
$\left\{ A_{i}\right\} $ satisfying $\sum A_{i}^{\dagger}A_{i}=I$.
Its dual $\hat{\Lambda}$ is described in terms of the Kraus operators
$\left\{ A_{i}^{\dagger}\right\} $(the dual is the adjoint map with respect to the Hilbert-Schmidt inner product).  A channel $\Lambda$ is said
to be \emph{unital} if its action preserves Identity: $\Lambda\left(I\right)=I$,
and \emph{nonunital} if it does not, i.e., $\Lambda\left(I\right)\neq I$. A dual channel $\hat{\Lambda}$ is trace-preserving iff $\Lambda$ is unital. 
Sending half of pure state $\vert\phi\rangle$ down the channel $\$\in\left\{ \Lambda,\hat{\Lambda}\right\} $
gives rise to a mixed state \begin{eqnarray}
\rho_{\phi,\$} & = & \left(I\otimes\$\right)\rho_{\phi},\label{rho-phi-}\end{eqnarray}
 where $\rho_{\phi}=\vert\phi\rangle\langle\phi\vert$. For the channel
$\$$ with a set of Kraus operators $\left\{ K_{i}\right\} $, the
above equation takes the form\begin{eqnarray}
\rho_{\phi,\$} & = & \sum_{i}\left(I\otimes K_{i}\right)\rho_{\phi}\left(I\otimes K_{i}^{\dagger}\right)\label{rho-phi-1}\end{eqnarray}

Recall that,
by transmitting one half of a pure entangled state $\vert\psi\rangle$
through a noisy channel $\Lambda$ results in a mixed state $\rho_{\psi,\Lambda}$
of singlet fraction $F\left(\rho_{\psi,\Lambda}\right)$. Simply maximizing
$F\left(\rho_{\psi,\Lambda}\right)$ over all transmitted pure states
$\vert\psi\rangle$ may not yield the optimal value we are looking
for because it is known \cite{Badziag,Bandyopadhyay,VV-I} that TP-LOCC
can enhance singlet fraction of two qubit states. Thus for a given
$\rho_{\psi,\Lambda}$, the maximum achievable singlet fraction is
defined as \cite{VV-I}\begin{eqnarray}
F^{*}\left(\rho_{\psi,\Lambda}\right) & = & \max_{L}F\left(L\left(\rho_{\psi,\Lambda}\right)\right),\label{max-achievable-fidelity}\end{eqnarray}
 where the maximization is over all TP-LOCC $L$ carried out by Alice
and Bob on their respective qubits. Note that, unlike $F$, which
can increase under TP-LOCC, $F^{*}$ is an entanglement monotone \cite{VV-I}
and can be exactly computed \cite{VV-I} by solving a convex semi-definite
program for any given two-qubit density matrix. Maximizing $F^{*}$
over all transmitted pure states $\vert\psi\rangle$ yields the \emph{optimal
singlet fraction} defined earlier in Eq.$\,$(\ref{optimal-fidelity-0}):\begin{eqnarray}
F\left(\Lambda\right) & = & \max_{\vert\psi\rangle}F^{*}\left(\rho_{\psi,\Lambda}\right).\label{optimal-fidelity}\end{eqnarray}
 It is clear from the above definitions that for any shared pure state
$\vert\psi\rangle$, the following inequalities hold: \begin{equation}
F\left(\Lambda\right)\geq F^{*}\left(\rho_{\psi,\Lambda}\right)\geq F\left(\rho_{\psi,\Lambda}\right).\label{fidelity-inequalities}\end{equation}

Our first result gives an exact formula for the optimal singlet fraction
defined in Eq.$\,$(\ref{optimal-fidelity}) and an explicit protocol by which 
the optimal value can be achieved. We show that for any qubit channel $\Lambda$ there exists an
{}``optimal'' two-qubit pure state $|\psi_{0}\rangle$, not necessarily
maximally entangled, such that all the inequalities in (\ref{fidelity-inequalities})
become equalities.

\begin{thm} The optimal singlet fraction of a qubit channel $\Lambda$
is given by \begin{eqnarray}
F\left(\Lambda\right) & = & \lambda_{\max}\left(\rho_{\Phi^{+},\Lambda}\right),\label{thm-1}\end{eqnarray}
 where $\vert\Phi^{+}\rangle=\frac{1}{\sqrt{2}}\left(\vert00\rangle+\vert11\rangle\right)$,
and $\lambda_{\max}\left(\rho_{\Phi^{+},\Lambda}\right)$ is the maximum
eigenvalue of the density matrix $\rho_{\Phi^{+},\Lambda}$. Moreover,
the following equalities hold: \begin{equation}
F\left(\Lambda\right)=F^{*}\left(\rho_{\psi_{0},\Lambda}\right)=F\left(\rho_{\psi_{0},\Lambda}\right),\label{thm-1-eq-2}\end{equation}
 where $\vert\psi_{0}\rangle$ is the eigenvector corresponding to
the maximum eigenvalue of the density matrix $\rho_{\Phi^{+},\hat{\Lambda}}$\end{thm}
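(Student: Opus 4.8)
The plan is to push the whole optimisation onto the single Choi state $\rho_{\Phi^{+},\Lambda}$ by repeated use of the ``ricochet'' identity $(X\otimes I)\vert\Phi^{+}\rangle=(I\otimes X^{T})\vert\Phi^{+}\rangle$. The first step is to note that transmitting half of an arbitrary normalised two-qubit pure state through $\Lambda$ is the same as applying a one-sided filter to $\rho_{\Phi^{+},\Lambda}$: writing $\vert\psi\rangle=\sqrt{2}\,(I\otimes M)\vert\Phi^{+}\rangle$ with $\mathrm{Tr}(M^{\dagger}M)=1$, the ricochet identity gives $\rho_{\psi,\Lambda}=2\,(G\otimes I)\,\rho_{\Phi^{+},\Lambda}\,(G^{\dagger}\otimes I)$ for an operator $G$ on Alice's qubit with $\mathrm{Tr}(G^{\dagger}G)=1$; since $\Lambda$ is trace preserving, $\mathrm{Tr}_{B}\rho_{\Phi^{+},\Lambda}=I/2$, so the right-hand side is automatically normalised, and conversely every such $G$ arises from some $\vert\psi\rangle$.

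Next I would compute $\max_{\vert\psi\rangle}F(\rho_{\psi,\Lambda})$. Using $F(\rho)=\max_{U}\langle\Phi^{+}\vert(I\otimes U^{\dagger})\rho(I\otimes U)\vert\Phi^{+}\rangle$ over unitaries $U$, inserting the expression from the previous step, and applying the ricochet identity twice, the two maximisations — over the transmitted state $\vert\psi\rangle$ and over the maximally entangled target — collapse to a single one, \[\max_{\vert\psi\rangle}F(\rho_{\psi,\Lambda})=\max_{\vert\eta\rangle}\langle\eta\vert\rho_{\Phi^{+},\Lambda}\vert\eta\rangle=\lambda_{\max}(\rho_{\Phi^{+},\Lambda}),\] the middle maximum running over all normalised bipartite pure states. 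I would keep track of the optimiser: the leftover unitary freedom can be set to the identity, giving the optimal transmitted state $\vert\psi_{0}\rangle=\sqrt{2}\,(P_{0}^{\dagger}\otimes I)\vert\Phi^{+}\rangle$, where $\sqrt{2}\,(P_{0}\otimes I)\vert\Phi^{+}\rangle$ is a top eigenvector of $\rho_{\Phi^{+},\Lambda}$. Invoking the adjoint relation $\mathrm{Tr}[(I\otimes\Lambda)(X)\,Y]=\mathrm{Tr}[X\,(I\otimes\hat{\Lambda})(Y)]$ then identifies this $\vert\psi_{0}\rangle$ with a top eigenvector of $\rho_{\Phi^{+},\hat{\Lambda}}$, matching the statement, so (\ref{thm-1-eq-2}) will follow once we know that post-processing does not help.

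That last point, $F(\Lambda)\le\lambda_{\max}(\rho_{\Phi^{+},\Lambda})$ — i.e.\ no trace-preserving LOCC applied to any $\rho_{\psi,\Lambda}$ beats the best bare transmission — is the main obstacle; the reverse inequality $F(\Lambda)\ge\lambda_{\max}(\rho_{\Phi^{+},\Lambda})$ is immediate from the above and (\ref{fidelity-inequalities}). Here I would exploit that, by the first step, all the states $\rho_{\psi,\Lambda}$ lie in a single SLOCC class (that of $\rho_{\Phi^{+},\Lambda}$, reached by the generically invertible filter $G$), and feed this into the Verstraete--Verschelde formula for $F^{*}$ of a two-qubit state \cite{VV-I}, reducing the claim to one matrix inequality for the $4\times4$ Choi matrix of a qubit channel; this is precisely where $d=2$ is essential. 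The naive route does not suffice: bounding $F(L(\rho_{\psi,\Lambda}))=2\,\mathrm{Tr}[\rho_{\Phi^{+},\Lambda}(G^{\dagger}\otimes I)\,L^{\dagger}(\vert\Phi\rangle\langle\Phi\vert)\,(G\otimes I)]$ by $\lambda_{\max}(\rho_{\Phi^{+},\Lambda})$ times $2\langle\Phi\vert L(GG^{\dagger}\otimes I)\vert\Phi\rangle$ and then using $F^{*}\le1/2$ for separable states loses a factor of $2$, so one genuinely needs the fine structure of optimal two-qubit LOCC protocols. Granting this, the three ingredients combine via (\ref{fidelity-inequalities}) into $\lambda_{\max}(\rho_{\Phi^{+},\Lambda})=\max_{\vert\psi\rangle}F(\rho_{\psi,\Lambda})\le F(\Lambda)\le\lambda_{\max}(\rho_{\Phi^{+},\Lambda})$, forcing equality throughout and, for $\vert\psi_{0}\rangle$ as above, $F(\Lambda)=F^{*}(\rho_{\psi_{0},\Lambda})=F(\rho_{\psi_{0},\Lambda})=\lambda_{\max}(\rho_{\Phi^{+},\Lambda})$.
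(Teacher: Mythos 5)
The first two steps of your plan are sound and coincide in substance with the paper's argument for the pre-processed maximum: writing the transmitted state as a one-sided filter acting on the Choi state and using the ricochet identity collapses the two maximisations into $\max_{|\psi\rangle}F(\rho_{\psi,\Lambda})=\lambda_{\max}(\rho_{\Phi^{+},\Lambda})$, and your identification of the optimal input $|\psi_{0}\rangle$ with a top eigenvector of $\rho_{\Phi^{+},\hat{\Lambda}}$ is exactly the swap-and-conjugate relation the paper proves in its appendix (Lemma 5). The genuine gap is the converse inequality $F(\Lambda)\le\lambda_{\max}(\rho_{\Phi^{+},\Lambda})$, i.e.\ that trace-preserving LOCC post-processing of any $\rho_{\psi,\Lambda}$ cannot beat the best bare transmission. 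You correctly flag this as the main obstacle, but you do not prove it: ``all $\rho_{\psi,\Lambda}$ lie in one SLOCC class'' says nothing by itself about what a \emph{trace-preserving} LOCC can do to the singlet fraction, ``reducing the claim to one matrix inequality'' is left unstated, and ``granting this'' concedes precisely the nontrivial half of the theorem. As written, the proposal establishes only $F(\Lambda)\ge\lambda_{\max}(\rho_{\Phi^{+},\Lambda})$ together with the formula for the optimal unprocessed transmission.

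The missing ingredient, which is how the paper closes the argument, is the explicit structure of the optimal TP-LOCC from Verstraete--Verschelde \cite{VV-I}: the optimum for a two-qubit state is attained by a single one-sided filter $A$, keeping the filtered state on success and preparing a separable state of singlet fraction $\tfrac12$ on failure, so that $F^{*}(\rho_{\psi,\Lambda})=pF(\rho_{1})+\frac{1-p}{2}$. The decisive observation is that the filter can be placed on Alice's qubit, which never passed through the channel; hence $(A\otimes I)$ commutes with $(I\otimes\Lambda)$ and the success branch is again a bare transmission state, $\rho_{1}=(I\otimes\Lambda)\left(|\psi'\rangle\langle\psi'|\right)$ with $|\psi'\rangle\propto(A\otimes I)|\psi\rangle$, the normalisation matching the success probability $p=\langle\psi|(A^{\dagger}A\otimes I)|\psi\rangle$ because $\Lambda$ is trace preserving. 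Consequently $F(\rho_{1})\le\max_{|\psi\rangle}F(\rho_{\psi,\Lambda})=\lambda_{\max}(\rho_{\Phi^{+},\Lambda})$, and then $F^{*}(\rho_{\psi,\Lambda})\le p\,\lambda_{\max}(\rho_{\Phi^{+},\Lambda})+\frac{1-p}{2}\le\lambda_{\max}(\rho_{\Phi^{+},\Lambda})$, where the last step needs $\lambda_{\max}(\rho_{\Phi^{+},\Lambda})\ge\tfrac12$, which the paper obtains from the spectral identity $\lambda_{\min}(\rho_{\Phi^{+},\Lambda}^{\Gamma})+\lambda_{\max}(\rho_{\Phi^{+},\Lambda})=\tfrac12$ (appendix Lemma 6) for a non--entanglement-breaking channel. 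Some argument of this kind, exploiting the one-sided filter form of the optimal protocol and its commutation with the channel, is indispensable; without it your chain of inequalities does not close and the equalities $F(\Lambda)=F^{*}(\rho_{\psi_{0},\Lambda})=F(\rho_{\psi_{0},\Lambda})$ do not follow.
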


\begin{proof}

We begin by obtaining an exact expression of the maximum pre-processed
singlet fraction. It is defined as\begin{eqnarray}
F_{1}\left(\Lambda\right) & = & \max_{|\psi\rangle}F\left(\rho_{\psi,\Lambda}\right),\label{fidelity-pre-processed}\\
 & = & \max_{|\psi\rangle}\max_{|\Phi\rangle}\langle\Phi|\rho_{\psi,\Lambda}|\Phi\rangle,\label{proof-theorem-1-eq-1}\end{eqnarray}
 where $|\Phi\rangle$ is maximally entangled. Noting that every maximally
entangled state $|\Phi\rangle$ can be written as $U\otimes V|\Phi^{+}\rangle$,
for some $U,V\in SU\left(2\right)$, we can rewrite Eq.$\,$(\ref{proof-theorem-1-eq-1})
as \begin{equation}
F_{1}\left(\Lambda\right)=\max_{|\psi\rangle,U,V}\langle\Phi^{+}|\left(U^{\dagger}\otimes V^{\dagger}\right)\rho_{\psi,\Lambda}\left(U\otimes V\right)|\Phi^{+}\rangle.\label{proof-theorem-1-eq-2}\end{equation}
 
Let, $\rho_{\psi}=|\psi\rangle\langle\psi|$ and $\rho_{\Phi^{+}}=|\Phi^{+}\rangle\langle\Phi^{+}|$.
Using the fact that $(I\otimes V)|\Phi^{+}\rangle=(V^{T}\otimes I)|\Phi^{+}\rangle$,
we now simplify the above equation:  \begin{eqnarray}
F_{1}\left(\Lambda\right) & = & \max_{|\psi\rangle,U,V}\langle\Phi^{+}|\left(U^{\dagger}\otimes V^{\dagger}\right)\rho_{\psi,\Lambda}\left(U\otimes V\right)|\Phi^{+}\rangle\nonumber \\
 & = & \max_{|\psi\rangle,U,V}\langle\Phi^{+}|\left(U^{\dagger}\otimes V^{\dagger}\right)\sum_{i}\left(I\otimes A_{i}\right)\rho_{\psi}\left(I\otimes A_{i}^{\dagger}\right)\left(U\otimes V\right)|\Phi^{+}\rangle\nonumber \\
 & = & \max_{|\psi\rangle,U,V}\langle\psi|\sum_{i}(I\otimes A_{i}^{\dagger})(U\otimes V)\rho_{\Phi^{+}}(U^{\dagger}\otimes V^{\dagger})(I\otimes A_{i})|\psi\rangle\nonumber \\
 & = & \max_{|\psi\rangle,U,V}\langle\psi|\sum_{i}(I\otimes A_{i}^{\dagger})(UV^{T}\otimes I)\rho_{\Phi^{+}}(V^{*}U^{\dagger}\otimes I)(I\otimes A_{i})|\psi\rangle\nonumber \\
 & = & \max_{|\psi\rangle,U,V}\langle\psi|\left(UV^{T}\otimes I\right)\rho_{\Phi^{+},\hat{\Lambda}}\left(V^{*}U^{\dagger}\otimes I\right)|\psi\rangle\nonumber \\
 & = & \max_{|\psi\rangle}\langle\psi|\rho_{\Phi^{+},\hat{\Lambda}}|\psi\rangle,\label{pre-processed-1}\end{eqnarray}

 From the above equation it immediately follows that ,

\begin{equation}
F_{1}\left(\Lambda\right)=F\left(\rho_{\psi_{0},\Lambda}\right)=\lambda_{\max}\left(\rho_{\Phi^{+},\hat{\Lambda}}\right)\label{proof-theorem-1-eq-3}\end{equation}
 where $\lambda_{\max}$ denotes the maximum eigenvalue of $\rho_{\Phi^{+},\hat{\Lambda}}$
and $|\psi_{0}\rangle$ the corresponding eigenvector. Using the result,

\begin{eqnarray}
\lambda_{\max}\left(\rho_{\Phi^{+},\hat{\Lambda}}\right) & = & \lambda_{\max}\left(\rho_{\Phi^{+},\Lambda}\right)\label{proof-theorem-1-eq-4}\end{eqnarray}
 proved in lemma 5(section A of Appendix) , we have therefore proven that \begin{eqnarray}
F\left(\Lambda\right) & \geq & F_{1}\left(\Lambda\right)=\lambda_{\max}\left(\rho_{\Phi^{+},\Lambda}\right)\label{proof-theorem-1-eq-5}\end{eqnarray}
 The following lemma now gives an upper bound on the optimal singlet
fraction $F(\Lambda)$.

\begin{lem} For a qubit channel $\Lambda$ \begin{eqnarray}
F\left(\Lambda\right) & \leq & \lambda_{\max}\left(\rho_{\Phi^{+},\Lambda}\right),\label{lem-1}\end{eqnarray}
 where $\lambda_{\max}\left(\rho_{\Phi^{+},\Lambda}\right)$ denotes the maximum eigenvalue of the density
matrix $\rho_{\Phi^{+},\Lambda}$. \end{lem}

\begin{proof} Recall that by definition, $F\left(\Lambda\right)=\max_{\psi}F^{*}\left(\rho_{\psi,\Lambda}\right)$;
in particular, \begin{equation}
F^{*}\left(\rho_{\psi,\Lambda}\right)=\max_{L}F\left(L\left(\rho_{\psi,\Lambda}\right)\right)=F\left(\rho_{\psi,\Lambda}^{*}\right),\label{max-achievable-fidelity}\end{equation}
 where $\rho_{\psi,\Lambda}^{*}$ is the state obtained from $\rho_{\psi,\Lambda}$
by \emph{optimal} TP-LOCC for a given $\rho_{\psi,\Lambda}$. It was
shown in ref.\cite{VV-I} that the optimal TP-LOCC is an $1$-way LOCC protocol,
where any of the parties apply a state dependent filter. In case of
success the other party does nothing, and in case of failure, Alice
and Bob simply prepare a separable state. We have, therefore, \begin{eqnarray}
\rho_{\psi,\Lambda}^{*} & = & p\rho_{1}+\left(1-p\right)\rho_{s},\label{proof-lemma-1-eq-2}\end{eqnarray}
 where $\rho_{1}=\frac{1}{p}\left(A\otimes I\right)\rho_{\psi,\Lambda}\left(A^{\dagger}\otimes I\right)$
with $A$ being the optimal filter, is the state arising with probability
$p=\mbox{Tr}\left[\left(A^{\dagger}A\otimes I\right)\rho_{\psi,\Lambda}\right]$
when filtering is successful and $\rho_{s}$ is a separable state
which Alice and Bob prepare when the filtering operation is not successful.
$F^{*}$ is given by ( \cite{VV-I}), \begin{eqnarray}
F^{*}\left(\rho_{\psi,\Lambda}\right) & = & F\left(\rho_{\psi,\Lambda}^{*}\right)\nonumber \\
 & = & pF\left(\rho_{1}\right)+\frac{1-p}{2}\label{proof-lemma1-eq-3}\\
 & = & p\langle\Phi^{+}|\rho_{1}|\Phi^{+}\rangle+\frac{1-p}{2}.\label{proof-lemma1-eq-4}\end{eqnarray}
 Observe that the filter is applied at Alice's end, that is, on the
qubit she holds and not on the qubit that was sent through the channel
to Bob. In eqns. (\ref{proof-lemma1-eq-3}) and (\ref{proof-lemma1-eq-4})
, the separable state $\rho_{s}$ is chosen so that $\langle\Phi^{+}|\rho_{s}|\Phi^{+}\rangle=\frac{1}{2}$
and optimality of the filter $A$ implies that $F(\rho_{1})=\langle\Phi^{+}|\rho_{1}|\Phi^{+}\rangle$(if
the latter is not the case we will get another filter unitarily connected
with $A$ which yields higher singlet fraction). We will now show
that $F\left(\rho_{1}\right)\leq\lambda_{\max}\left(\rho_{\Phi^{+},\Lambda}\right)$
. First we observe that  \begin{eqnarray}
F\left(\rho_{1}\right) & = & \frac{1}{p}\langle\Phi^{+}|(A\otimes I)(I\otimes\Lambda)(|\psi\rangle\langle\psi|)\left(A^{\dagger}\otimes I\right)|\Phi^{+}\rangle\nonumber \\
 & = & \frac{1}{p}\langle\Phi^{+}|(I\otimes\Lambda)(A\otimes I)(|\psi\rangle\langle\psi|)\left(A^{\dagger}\otimes I\right)|\Phi^{+}\rangle. \nonumber \\ & & \label{proof-lemma1-eq-5}\end{eqnarray}

On the other hand, because $\Lambda$ is a trace preserving map, we
also observe that \begin{eqnarray}
p & = & \mbox{Tr}\left[(A^{\dagger}A\otimes I)\rho_{\psi,\Lambda}\right]\nonumber \\
% & = & \mbox{Tr}\left[(A^{\dagger}A\otimes I)(I\otimes\Lambda)|\psi\rangle\langle\psi|\right]\nonumber \\
 & = & \mbox{Tr}\left[(I\otimes\Lambda)(A^{\dagger}A\otimes I)|\psi\rangle\langle\psi|\right]\nonumber \\
 & = & \mbox{Tr}\left[(A^{\dagger}A\otimes I)\vert\psi\rangle\langle\psi\vert\right]\label{proof-lemma-eq-6}\end{eqnarray}
 .

We thus have $\rho_1= (I\otimes\Lambda)(|\psi^{\prime}\rangle\langle\psi^{\prime}|)$ and from  Eqns.$\,$(\ref{proof-lemma1-eq-5}) and (\ref{proof-lemma-eq-6}) 
we  get \begin{eqnarray}
F\left(\rho_{1}\right) & = & \langle\Phi^{+}|(I\otimes\Lambda)(|\psi^{\prime}\rangle\langle\psi^{\prime}|)|\Phi^{+}\rangle\nonumber \\
 & = & F\left(\rho_{\psi^{\prime},\Lambda}\right),\label{proof-lemma1-eq-7}\end{eqnarray}
 where $\vert\psi^{\prime}\rangle=\frac{1}{\sqrt{q}}\left(A\otimes I\right)\vert\psi\rangle$
is a normalized vector with $q=p=\langle\psi\vert\left(A^{\dagger}A\otimes I\right)\vert\psi\rangle$.
Hence from eqns. (\ref{fidelity-pre-processed}) and (\ref{proof-theorem-1-eq-5})
we have,

\begin{equation}
F\left(\rho_{1}\right)\leq F_{1}\left(\Lambda\right)=\lambda_{\max}\left(\rho_{\Phi^{+},\Lambda}\right).\end{equation}

Thus from Eq.$\,$(\ref{proof-lemma1-eq-4}) we have, \begin{eqnarray}
F^{*}\left(\rho_{\psi,\Lambda}\right) & \leq & p\lambda_{\max}\left(\rho_{\Phi^{+},\Lambda}\right)+\frac{1-p}{2}\nonumber \\
 & \leq & \lambda_{\max}\left(\rho_{\Phi^{+},\Lambda}\right)\label{proof-lemma-1-eq-8b}\end{eqnarray}
 . The last inequality follows from the fact that $\lambda_{\max}\left(\rho_{\Phi^{+},\Lambda}\right)>1/2$
(as the channel is not entanglement breaking, this follows by applying
Lemma 6 (section B of Appendix) on $\rho_{\Phi^{+},\Lambda}$).

Since Inequality (\ref{proof-lemma-1-eq-8b}) holds for any transmitted
pure state $\vert\psi\rangle$, we therefore conclude that \begin{eqnarray}
F\left(\Lambda\right) & \leq & \lambda_{\max}\left(\rho_{\Phi^{+},\Lambda}\right)\label{proof-lemma1-eq-9}\end{eqnarray}
This completes the proof of lemma 1. 
\end{proof}

From Eqs.$\,$(\ref{proof-theorem-1-eq-5}) and (\ref{lem-1}) we
have, $F\left(\Lambda\right)=\lambda_{\max}\left(\rho_{\Phi^{+},\Lambda}\right)$.

Now, as $F\left(\Lambda\right)\geq F^{*}\left(\rho_{\psi_{0},\Lambda}\right)\geq F\left(\rho_{\psi_{0},\Lambda}\right)$
from eqns. (\ref{proof-theorem-1-eq-3}) and (\ref{proof-theorem-1-eq-5})
we have,

\begin{equation}
F\left(\Lambda\right)=F^{*}\left(\rho_{\psi_{0},\Lambda}\right)=F\left(\rho_{\psi_{0},\Lambda}\right)\end{equation}

This completes the proof of theorem 1. 
\end{proof}

What can we say about $|\psi_{0}\rangle$? Evidences so far are mixed:
$\vert\psi_{0}\rangle$ can be either maximally entangled (e.g., for depolarizing
channel \cite{fidelity}) or nonmaximally entangled (e.g., for amplitude
damping channel\cite{BG-2012}), but the answer for a generic qubit channel
is not known. The following result completely characterizes the channels
for which $\vert\psi_{0}\rangle$ is maximally entangled and for which
it is not.

\begin{thm} The state $\vert\psi_{0}\rangle$, as defined in Theorem
1, is maximally entangled if and only if the channel $\Lambda$ is
unital. \end{thm}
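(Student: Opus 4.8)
\emph{Strategy.} The plan is to show that whether the dominant eigenvector $|\psi_{0}\rangle$ of $\rho_{\Phi^{+},\hat{\Lambda}}$ is maximally entangled is controlled entirely by the one-qubit marginals of $\rho_{\Phi^{+},\hat{\Lambda}}$, and then to read off unitality from an explicit formula for those marginals. I would record the marginals first. Since $\hat{\Lambda}$ is always unital (the channel condition $\sum_{i}A_{i}^{\dagger}A_{i}=I$ says exactly $\hat{\Lambda}(I)=I$), tracing out the channel-free qubit gives $\mathrm{Tr}_{1}\,\rho_{\Phi^{+},\hat{\Lambda}}=\hat{\Lambda}(I/2)=I/2$, so one marginal is always maximally mixed. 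A short computation, using only that $\hat{\Lambda}$ is the adjoint of $\Lambda$, gives the other marginal as $\mathrm{Tr}_{2}\,\rho_{\Phi^{+},\hat{\Lambda}}=\tfrac12\,\Lambda(I)^{T}$. Hence \emph{the second marginal of $\rho_{\Phi^{+},\hat{\Lambda}}$ is maximally mixed if and only if $\Lambda$ is unital}; this is the bridge to the unitality condition.

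\emph{Unital $\Rightarrow$ maximally entangled.} If $\Lambda$ is unital, both marginals of $\rho_{\Phi^{+},\hat{\Lambda}}$ equal $I/2$. Every two-qubit state with maximally mixed one-qubit marginals is locally unitarily equivalent to a Bell-diagonal state---in the Fano form $\tfrac14\bigl(I\otimes I+\sum_{jk}T_{jk}\,\sigma_{j}\otimes\sigma_{k}\bigr)$ the real correlation matrix $T$ is brought to diagonal form by local rotations obtained from its singular value decomposition---and a Bell-diagonal state has the four (maximally entangled) Bell states as eigenvectors. Since local unitaries preserve maximal entanglement, the eigenvector $|\psi_{0}\rangle$ for the largest eigenvalue is maximally entangled; if that eigenvalue is degenerate one takes the Bell-state eigenvector inside its eigenspace.

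\emph{Maximally entangled $\Rightarrow$ unital.} Here I would prove the lemma: \emph{if a Hermitian two-qubit operator $\rho$ has a maximally entangled eigenvector, then $\mathrm{Tr}_{1}\rho=I/2$ if and only if $\mathrm{Tr}_{2}\rho=I/2$.} By conjugating with a local unitary it is enough to treat the eigenvector $|\Phi^{+}\rangle$. Write $\rho=\lambda_{0}|\Phi^{+}\rangle\langle\Phi^{+}|+\sigma$ with $\sigma$ supported on $|\Phi^{+}\rangle^{\perp}$; since $\langle\Phi^{+}|(M\otimes I)|\Phi^{+}\rangle=\tfrac12\mathrm{Tr}\,M$, that orthocomplement is spanned by vectors $(M\otimes I)|\Phi^{+}\rangle$ with $M$ \emph{traceless}, so $\sigma=\sum_{k}c_{k}\,(M_{k}\otimes I)|\Phi^{+}\rangle\langle\Phi^{+}|(M_{k}^{\dagger}\otimes I)$ with each $M_{k}$ traceless and $c_{k}\in\mathbb{R}$. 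Then $\mathrm{Tr}_{2}\sigma=\tfrac12\sum_{k}c_{k}M_{k}M_{k}^{\dagger}$ and $\mathrm{Tr}_{1}\sigma=\tfrac12\bigl(\sum_{k}c_{k}M_{k}^{\dagger}M_{k}\bigr)^{T}$, and the key identity is that for a traceless $M=\vec{m}\cdot\vec{\sigma}$ one has $MM^{\dagger}=|\vec{m}|^{2}I+\vec{w}\cdot\vec{\sigma}$ and $M^{\dagger}M=|\vec{m}|^{2}I-\vec{w}\cdot\vec{\sigma}$ with $\vec{w}=i\,\vec{m}\times\bar{\vec{m}}\in\mathbb{R}^{3}$. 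Summing, both marginals of $\rho$ take the form $\tfrac12(I\pm\vec{W}\cdot\vec{\sigma})$ (the transpose only flips the $\sigma_{y}$-component, which does not affect whether $\vec{W}=0$), so one is maximally mixed precisely when the other is. Applying this with $\rho=\rho_{\Phi^{+},\hat{\Lambda}}$ and with $|\psi_{0}\rangle$ as the maximally entangled eigenvector: the marginal $\mathrm{Tr}_{1}\rho_{\Phi^{+},\hat{\Lambda}}$ is automatically $I/2$, so $\mathrm{Tr}_{2}\rho_{\Phi^{+},\hat{\Lambda}}=I/2$ as well, and by the bridge fact this forces $\Lambda(I)=I$. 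Combined with the previous paragraph, the equivalence follows.

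\emph{Where the difficulty lies.} The ``only if'' direction is the crux. A priori ``the dominant eigenvector is maximally entangled'' concerns the whole spectral decomposition of $\rho_{\Phi^{+},\hat{\Lambda}}$; the work is to see that it collapses to a statement about one-qubit marginals, and the traceless-matrix identity $MM^{\dagger}+M^{\dagger}M\propto I$ is precisely what produces that collapse. The rest---the marginal formula $\mathrm{Tr}_{2}\,\rho_{\Phi^{+},\hat{\Lambda}}=\tfrac12\Lambda(I)^{T}$, the Bell-diagonalization in the ``if'' direction, and the handling of degenerate top eigenvalues---is routine.
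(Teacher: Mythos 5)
Your proposal is correct, and it reaches the equivalence by a genuinely different route than the paper. The paper works with $\rho_{\Phi^{+},\Lambda}$, first relating $|\psi_{0}\rangle$ to the top eigenvector of $\rho_{\Phi^{+},\Lambda}$ via the swap-conjugation lemma, then arguing in Kraus language: the spectral decomposition of $\rho_{\Phi^{+},\Lambda}$ yields trace-orthogonal Kraus operators, a maximally entangled eigenvector forces one of them to be proportional to a unitary (taken to be $\sqrt{p}\,I$ after composing with a unitary), trace-orthogonality makes the rest traceless, and the Pauli identity $(\vec{a}\cdot\vec{\sigma})(\vec{b}\cdot\vec{\sigma})=(\vec{a}\cdot\vec{b})I+i\,\vec{\sigma}\cdot(\vec{a}\times\vec{b})$ then shows $\sum_{k}A_{k}A_{k}^{\dagger}=\sum_{k}A_{k}^{\dagger}A_{k}=I$, contradicting nonunitality; the converse is delegated to the Ruskai--Szarek--Werner normal form. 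You instead work directly with $\rho_{\Phi^{+},\hat{\Lambda}}$ and reduce everything to its one-qubit marginals: the identities $\mathrm{Tr}_{1}\rho_{\Phi^{+},\hat{\Lambda}}=I/2$ and $\mathrm{Tr}_{2}\rho_{\Phi^{+},\hat{\Lambda}}=\tfrac12\Lambda(I)^{T}$ (both of which I checked) turn unitality into a marginal condition, your lemma that a Hermitian two-qubit operator with a maximally entangled eigenvector has one marginal maximally mixed iff the other is (proved via the same traceless-Pauli identity $MM^{\dagger}=|\vec m|^{2}I+\vec w\cdot\vec\sigma$, $M^{\dagger}M=|\vec m|^{2}I-\vec w\cdot\vec\sigma$) gives the ``only if'' direction, and the standard Fano-form/SVD Bell-diagonalization of states with maximally mixed marginals gives the ``if'' direction. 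The computational core (Pauli algebra on traceless operators) is the same in both proofs, but your version is more self-contained: it dispenses with the paper's swap lemma, with the post-processing-freedom step, and with the external citation for the unital case, and it isolates a reusable general fact about marginals of operators with a maximally entangled eigenvector; the paper's version, in exchange, exposes the Kraus-operator structure (one unitary piece plus traceless pieces) that underlies the phenomenon. Your handling of a possibly degenerate top eigenvalue (choosing a Bell-state eigenvector within the eigenspace) is the same implicit convention the paper uses, so no gap there.
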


\begin{proof} Recall that $|\psi_{0}\rangle$ is
the eigenvector corresponding to the maximum eigenvalue of $\rho_{\Phi^{+},\hat{\Lambda}}$.
Let $|\psi_{0}^{\prime}\rangle$ be the eigenvector corresponding
to the maximum eigenvalue of $\rho_{\Phi^{+}.\Lambda}$. The following
lemma  establishes the correspondence between
the vectors $|\psi_{0}\rangle$ and $|\psi_{0}^{\prime}\rangle$.

\begin{lem} Let $V$ be the swap operator defined by the action $V|\eta\rangle|\chi\rangle=|\chi\rangle|\eta\rangle$.
Then $V|\psi_{0}\rangle^{*}=|\psi_{0}^{\prime}\rangle$. \end{lem}

\begin{proof}

Let us now consider the spectral decomposition of $\rho_{\Phi^{+},\Lambda}$:
Let \begin{eqnarray}
\rho_{\Phi^{+},\Lambda} & = & \sum_{k=0}^{3}p_{k}|\psi_{k}'\rangle\langle\psi_{k}'|,\label{proof-thm-2-eq-1}\end{eqnarray}
 be the spectral decomposition.

 From eqn. (\ref{proof-lem-4})  in the appendix we have, \begin{equation}
\rho_{\Phi^{+},\hat{\Lambda}}=\sum_{k}\lambda_{k}{(V^{\dagger}|\psi_{k}'\rangle\langle\psi_{k}'|V)}^{*}.\label{eq-choi-dual-sd}\end{equation}
 For different values of $k$, ${(V^{\dagger}|\psi_{k}'\rangle)}^{*}$
are orthogonal as $V$ is unitary .

Hence we see that eqn. (\ref{eq-choi-dual-sd}) is in fact a spectral
decomposition of $\rho_{\Phi^{+},\hat{\Lambda}}$ with eigenvectors

\begin{equation}
|\psi_{k}\rangle={(V^{\dagger}|\psi_{k}'\rangle)}^{*}.\label{eq-alpha-alphap}\end{equation}
 The Schmidt coefficients of $|\psi_{k}'\rangle$ are same as
that of $|\psi_{k}\rangle$. The entanglement of $|\psi_{k}'\rangle$
is thus also same as that of $|\psi_{\alpha}\rangle$.

Let $\psi_{0}'$ be the eigenvector corresponding to the maximum eigenvalue
of $\rho_{\Phi^{+},\Lambda}$. We have from eqn. (\ref{eq-alpha-alphap})
, \begin{equation}
|\psi_{0}\rangle={(V^{\dagger}|\psi_{0}'\rangle)}^{*}.\label{eq-psi0-psi0p}\end{equation}
This completes the proof of lemma 2.  
\end{proof}

Therefore, if $|\psi_{0}^{\prime}\rangle$ is maximally entangled,
then so is $|\psi_{0}\rangle$ and vice versa. We will prove the theorem
by showing that $|\psi_{0}^{\prime}\rangle$ is maximally entangled
if and only if $\Lambda$ is unital.

We first show that if $|\psi_{0}^{\prime}\rangle$ is maximally entangled
then $\Lambda$ must be unital. We first note that the Kraus operators
of the channel $\Lambda$ can be obtained from the action of the channel
on the maximally entangled state $|\Phi^{+}\rangle$. 

Now for every $k$, we can write $|\psi_{k}'\rangle$ as\begin{eqnarray}
|\psi_{k}'\rangle & = & \left(I\otimes G_{k}\right)|\Phi^{+}\rangle,\label{proof-thm-2-eq-2}\end{eqnarray}
 where $G_{k}$ is a $2\times2$ complex matrix. It was shown in \cite{fidelity}
that the channel $\Lambda$ can be described in terms of the Kraus
operators $\left\{ \sqrt{p_{k}}G_{k}\right\} $. Noting that (a) $\langle\psi_{i}'|\psi_{j}'\rangle=\delta_{ij}$,
and (b) for any operator $O$, $\langle\Phi^{+}|I\otimes O|\Phi^{+}\rangle=\frac{1}{2}\mbox{Tr}O$,
it follows that the Kraus operators $\left\{ \sqrt{p_{k}}G_{k}\right\} $
are trace orthogonal. That is, \begin{eqnarray}
\mbox{Tr}A_{k}^{\dagger}A_{l} & = & 2\sqrt{p_{k}p_{l}}\delta_{kl},\label{proof-thm-2-eq-3}\end{eqnarray}
 where $A_{k}=\sqrt{p_{k}}G_{k}$. The Kraus operators thus obtained
through the spectral decomposition of $\rho_{\Phi^{+},\Lambda}$ are
trace orthogonal. They also satisfy $\sum A_{k}^{\dagger}A_{k}=I$,
as $\Lambda$ is a TPCP map.

Suppose now the channel $\Lambda$ is non-unital, i.e., $\Lambda\left(I\right)\neq I$.
This implies that \begin{eqnarray}
\sum A_{k}A_{k}^{\dagger} & \neq & I\label{proof-thm-2-eq-4}\end{eqnarray}
 None of our considerations change if we consider a channel $U\circ\Lambda$
with Kraus operators $UA_{k}$ where $U\in SU(2)$ . This is because
the eigenvectors of $\rho_{\Phi^{+},\Lambda}$ and $\rho_{\Phi^{+},U\Lambda}$
are local unitarily connected and eigenvalues are same. Let us now
assume that one of the eigenstates($|\psi_{0}'\rangle$ say) in the
spectral decomposition of ${\rho}_{\Phi^{+},\Lambda}$ in Eq.$\,$(\ref{proof-thm-2-eq-1})
is maximally entangled. This necessarily implies one of the Kraus
operators say, $A_{0}$ is proportional to a unitary. Now because
of the post-processing freedom, without any loss of generality we
can take $A_{0}$ to be $\sqrt{p}I$, with $p\in[0,1]$. Due to trace-orthogonality
{[}Eq.$\,$(\ref{proof-thm-2-eq-3}){]} we will have \begin{equation}
\mbox{Tr}(A_{k})=0,k=1,2,3.\label{traceless-ops}\end{equation}
 We can thus take $A_{k}=\overrightarrow{\alpha_{k}}.\overrightarrow{\sigma}$,
where $\overrightarrow{\alpha_{k}}\in\mathbb{C}^{3}$ and $\overrightarrow{\sigma}=\left\{ \sigma_{x},\sigma_{y},\sigma_{z}\right\} $,
for $k=1,2,3$. On using $\left(\vec{\sigma} \cdot \vec{a}\right) \left(\vec{\sigma} \cdot \vec{b}\right) = (\vec{a} \cdot \vec{b})I + i \vec{\sigma} \cdot (\vec{a} \times \vec{b}) $ the trace preservation condition $\sum A_{k}^{\dagger}A_{k}=I$
now becomes, \begin{equation}
pI+\sum_{k=1}^{3}(\overrightarrow{\alpha_{k}}^{*}\centerdot\overrightarrow{\alpha_{k}})I+i(\overrightarrow{\alpha_{k}}^{*}\times\overrightarrow{\alpha_{k}})\centerdot\overrightarrow{\sigma}=I,\label{trace-preservation}\end{equation}
 from which we obtain,\begin{eqnarray}
p+\sum_{k=1}^{3}(\overrightarrow{\alpha_{k}}^{*}\centerdot\overrightarrow{\alpha_{k}}) & = & 1,\nonumber \\
\sum_{k=1}^{3}\overrightarrow{\alpha_{k}}^{*}\times\overrightarrow{\alpha_{k}} & = & 0.\label{TPCP-2}\end{eqnarray}
 On the other hand the condition for non-unitality {[}Eq.$\,$(\ref{proof-thm-2-eq-4}){]}
of the channel gives us, \begin{equation}
pI+\sum_{k=1}^{3}(\overrightarrow{\alpha_{k}}^{*}\centerdot\overrightarrow{\alpha_{k}})I-i(\overrightarrow{\alpha_{k}}^{*}\times\overrightarrow{\alpha_{k}})\centerdot\overrightarrow{\sigma}\neq I.\label{contradiction}\end{equation}
 which is clearly in contradiction with Eqn.$\,$(\ref{TPCP-2}) .
Thus $\rho_{\Phi^{+},\Lambda}$ cannot have a maximally entangled
eigenvector if $\Lambda$ is non-unital. Hence, $|\psi_{0}'\rangle$
is not maximally entangled. Therefore it follows that if $|\psi_{0}\rangle$
is maximally entangled, then the channel must be unital.

We will now show that if $\Lambda$ is unital then $|\psi_{0}^{\prime}\rangle$
is maximally entangled. In \cite{Ruskai-2002} it was proved that
that for any unital qubit channel $\Lambda$, $\rho_{\Phi^{+},\Lambda}$
is local unitarily connected to the Bell-diagonal state $\sum_{i=0}^{3}p_{i}(I\otimes\sigma_{i})|\Phi^{+}\rangle\langle\Phi^{+}|(I\otimes\sigma_{i})$
with $\sigma_{0}=I$, $1\ge p_{i}\ge0$ and $\sum_{i}p_{i}=1$. It
immediately follows that $|\psi_{0}'\rangle$ is maximally entangled.
This completes the proof of  theorem 2. \end{proof}

\section{ Optimal
singlet fraction and the maximum output negativity} Here we show
that $F\left(\Lambda\right)$ is related to the negativity of the
density matrix $\rho_{\Phi^{+},\Lambda}.$ We first note that an upper
bound on $F^{*}(\rho_{\psi,\Lambda})$ can be given 
in terms of its negativity \cite{F<=1/2} $N\left(\rho_{\psi,\Lambda}\right)$: \begin{equation}
F^{*}\left(\rho_{\psi,\Lambda}\right)\leq\frac{1}{2}\left[1+N\left(\rho_{\psi,\Lambda}\right)\right],\label{eq-negativity}\end{equation}
 where $N\left(\rho_{\psi,\Lambda}\right)=\max\left\{ 0,-2\lambda_{\min}\left(\rho_{\psi,\Lambda}^{\Gamma}\right)\right\} $
and $\rho_{\psi,\Lambda}^{\Gamma}$ is the partially transposed matrix
obtained from $\rho_{\psi,\Lambda}$. Maximizing over all input states
$\vert\psi\rangle$we get, \begin{equation}
F(\Lambda)\leq\frac{1}{2}\left[1+N(\Lambda)\right],\label{negativity-upper-bound}\end{equation}
 where $N\left(\Lambda\right)=\max_{\psi}N\left(\rho_{\psi,\Lambda}\right)$.
An interesting question here is, does the optimal singlet fraction
always reach the above upper bound for all channels $\Lambda?$ In
order to answer this question, we first prove the following: \begin{lem}
For a qubit channel $\Lambda$, the optimal singlet fraction $F\left(\Lambda\right)$
is related to the negativity $N\left(\rho_{\Phi^{+},\Lambda}\right)$
of the state $\rho_{\Phi^{+},\Lambda}$ by the following relation:
\begin{eqnarray}
F\left(\Lambda\right) & = & \frac{1}{2}\left[1+N\left(\rho_{\Phi^{+},\Lambda}\right)\right]\label{optimal-sf-negativity}\end{eqnarray}
 \end{lem}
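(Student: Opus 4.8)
The plan is to leverage Theorem 1, which already gives $F(\Lambda)=\lambda_{\max}(\rho_{\Phi^{+},\Lambda})$, so that the identity to be proved reduces to the purely spectral statement
\begin{equation}
\lambda_{\max}\left(\rho_{\Phi^{+},\Lambda}\right)=\tfrac{1}{2}\left[1-2\lambda_{\min}\left(\rho_{\Phi^{+},\Lambda}^{\Gamma}\right)\right],
\end{equation}
i.e. $\lambda_{\max}(\rho)=\tfrac12-\lambda_{\min}(\rho^{\Gamma})$ for the particular two-qubit state $\rho=\rho_{\Phi^{+},\Lambda}$ (with the negativity being nonzero since, as already noted in the proof of Lemma 1, $\lambda_{\max}>1/2$ forces $\rho_{\Phi^{+},\Lambda}$ to be entangled when $\Lambda$ is not entanglement breaking; the entanglement-breaking case should be handled separately, where both sides equal $1/2$). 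First I would recall from the proof of Theorem 2 that $\rho_{\Phi^{+},\Lambda}$ admits a spectral decomposition $\sum_k p_k|\psi_k'\rangle\langle\psi_k'|$ whose associated Kraus operators $A_k=\sqrt{p_k}G_k$, defined by $|\psi_k'\rangle=(I\otimes G_k)|\Phi^{+}\rangle$, are \emph{trace orthogonal}: $\mathrm{Tr}\,A_k^{\dagger}A_l=2\sqrt{p_kp_l}\,\delta_{kl}$. Equivalently, writing each $G_k$ in the Pauli basis, the four eigenvectors $|\psi_k'\rangle$ form a basis of "generalized Bell states" obtained by a single $2\times2$ unitary rotation on one side.

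The key step is then to exploit the structure of the partial transpose in this basis. I would use the standard fact that for the maximally entangled fiducial state, $\rho_{\Phi^{+}}^{\Gamma}=\tfrac12 V$ where $V$ is the swap operator (eigenvalues $+\tfrac12$ with multiplicity $3$ and $-\tfrac12$ once), together with the identity $(I\otimes G_k)\rho_{\Phi^{+}}(I\otimes G_l^{\dagger})$ under partial transpose transforming into something controlled by $G_k$ and $G_l^{*}$. Concretely, $\rho_{\Phi^{+},\Lambda}^{\Gamma}=\sum_k p_k\,[(I\otimes G_k)\rho_{\Phi^{+}}(I\otimes G_k^{\dagger})]^{\Gamma}$, and I would show — using trace-orthogonality of the $A_k$ and the relation $(I\otimes G)|\Phi^+\rangle=(G^T\otimes I)|\Phi^+\rangle$ — that in an appropriate maximally entangled basis the partial transpose acts as a real symmetric matrix whose spectrum is a fixed affine image of the Schmidt/eigen-data of $\rho_{\Phi^{+},\Lambda}$. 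The cleanest route is probably to reduce to the unital case first via the post-processing freedom $\Lambda\mapsto U\circ\Lambda$ (which leaves both $\lambda_{\max}(\rho_{\Phi^{+},\Lambda})$ and the negativity invariant, since it acts by a local unitary), then invoke the Ruskai et al.\ normal form: $\rho_{\Phi^{+},\Lambda}$ is local-unitarily equivalent to a Bell-diagonal state $\sum_i p_i\,(I\otimes\sigma_i)|\Phi^+\rangle\langle\Phi^+|(I\otimes\sigma_i)$, for which $\rho^{\Gamma}$ is also diagonal in the magic basis with eigenvalues $\tfrac12(1-2p_{\max})$ once and $\tfrac12$-type entries otherwise; a direct computation then gives $-\lambda_{\min}(\rho^{\Gamma})=p_{\max}-\tfrac12=\lambda_{\max}(\rho)-\tfrac12$, which rearranges to the claim.

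The main obstacle I anticipate is the \emph{nonunital} case: the Ruskai normal form used above is stated only for unital channels, so one cannot directly reduce every channel to Bell-diagonal form. I would handle this either by working directly with the trace-orthogonal Kraus decomposition of $\rho_{\Phi^{+},\Lambda}$ (which holds for \emph{all} qubit channels, as established in the proof of Theorem 2) and computing the partial transpose in the resulting maximally entangled eigenbasis, or — more slickly — by relating $N(\rho_{\Phi^{+},\Lambda})$ to $N(\rho_{\Phi^{+},\hat\Lambda})$ via Lemma 2 (the map $|\psi_k'\rangle\mapsto(V^{\dagger}|\psi_k'\rangle)^*$ preserves Schmidt coefficients hence preserves both $\lambda_{\max}$ and the partial-transpose spectrum), and then using that $F(\Lambda)=F_1(\Lambda)=\max_\psi\langle\psi|\rho_{\Phi^{+},\hat\Lambda}|\psi\rangle$ together with the negativity upper bound (\ref{negativity-upper-bound}) applied in the dual picture; equality in (\ref{eq-negativity}) for a state of the form $(I\otimes\Lambda)|\psi_0\rangle\langle\psi_0|$ with $|\psi_0\rangle$ the top eigenvector is exactly what needs to be checked, and the trace-orthogonality of the Kraus operators is the structural input that makes it tight. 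Once the spectral identity $\lambda_{\max}(\rho_{\Phi^{+},\Lambda})=\tfrac12-\lambda_{\min}(\rho_{\Phi^{+},\Lambda}^{\Gamma})$ is established in full generality, combining it with Theorem 1 finishes the proof.
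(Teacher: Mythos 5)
Your first step --- using Theorem 1 to reduce the lemma to the spectral identity $\lambda_{\max}\left(\rho_{\Phi^{+},\Lambda}\right)=\frac{1}{2}-\lambda_{\min}\left(\rho_{\Phi^{+},\Lambda}^{\Gamma}\right)$ --- is exactly how the paper proceeds, and your Bell-diagonal computation settles the unital case correctly. The gap is the nonunital case, which is precisely where the content of the lemma lies, and none of your proposed routes closes it. First, the reduction ``to the unital case via the post-processing freedom $\Lambda\mapsto U\circ\Lambda$'' cannot work: $(U\circ\Lambda)(I)=U\Lambda(I)U^{\dagger}$, so unitary post-composition preserves nonunitality, and the Bell-diagonal normal form is simply unavailable. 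Second, your description of the spectral eigenbasis of $\rho_{\Phi^{+},\Lambda}$ as ``generalized Bell states obtained by a single $2\times2$ unitary rotation on one side'' is false for nonunital channels --- Theorem 2 shows the eigenvectors are then \emph{not} maximally entangled, the $G_{k}$ are trace-orthogonal but not proportional to unitaries --- and the promised computation of the partial-transpose spectrum in that basis is never carried out. Third, the ``dual picture'' route rests on checking equality in (\ref{eq-negativity}) for the state $\left(I\otimes\Lambda\right)|\psi_{0}\rangle\langle\psi_{0}|$; but that equality is generically \emph{false}: for the amplitude damping channel the paper's own Section III example gives $F^{*}\left(\rho_{\chi,\Lambda}\right)=\frac{1}{2}\left[1+N\left(\rho_{\Phi^{+},\Lambda}\right)\right]<\frac{1}{2}\left[1+N\left(\rho_{\chi,\Lambda}\right)\right]$ since $N\left(\rho_{\chi,\Lambda}\right)>N\left(\rho_{\Phi^{+},\Lambda}\right)$, where $|\chi\rangle$ is the optimal (top-eigenvector) input. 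So that strategy would require a statement the lemma's own context refutes.

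The paper's proof avoids all of this by invoking its Appendix Lemma 6: for \emph{any} two-qubit state $\sigma_{AB}$ with $\mbox{Tr}_{B}\left(\sigma_{AB}\right)=\frac{1}{2}I$ one has $\lambda_{\min}\left(\sigma_{AB}^{\Gamma}\right)+\lambda_{\max}\left(\sigma_{AB}\right)=\frac{1}{2}$. That lemma is proved using the full Ruskai--Szarek--Werner canonical form of an arbitrary (not necessarily unital) qubit channel, i.e.\ the diagonal matrix $\left(\lambda_{1},\lambda_{2},\lambda_{3}\right)$ \emph{together with} the translation components $t_{i}$ that encode nonunitality, followed by an elementary row/column manipulation showing the characteristic polynomials of $\rho_{\Phi^{+},\Lambda'}$ and $\rho_{\Phi^{+},\Lambda'}^{\Gamma}$ are related by $\lambda\mapsto\frac{1}{2}-\lambda$. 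To repair your argument you would need either to reproduce such a general computation (your trace-orthogonal Kraus basis could in principle serve, but the determinant work must actually be done, without assuming maximally entangled eigenvectors), or to cite a result of this generality; the Bell-diagonal shortcut and the equality-in-(\ref{eq-negativity}) shortcut both genuinely fail outside the unital case. Combining the general spectral identity with Theorem 1 and $N\left(\rho_{\Phi^{+},\Lambda}\right)=-2\lambda_{\min}\left(\rho_{\Phi^{+},\Lambda}^{\Gamma}\right)$ (valid since the channel is not entanglement breaking) then finishes the proof, as in the paper.
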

\begin{proof}
The proof follows by using the formula of negativity, simple
application of Lemma 6 (see section B of Appendix) and Thm 1:

\begin{eqnarray}
\frac{1}{2}\left[1+N\left(\rho_{\Phi^{+},\Lambda}\right)\right] & = & \frac{1}{2}\left[1-2\lambda_{\min}\left(\rho_{\Phi^{+},\Lambda}^{\Gamma}\right)\right]\nonumber \\
 & = & \lambda_{\max}\left(\rho_{\Phi^{+},\Lambda}\right)=F\left(\Lambda\right) \label{eqn-24}\end{eqnarray}
This completes the proof of lemma 3. 
\end{proof}
\vspace{3mm} 
Next we show that  that, $F\left(\Lambda\right)$ does
not reach the upper bound in Eq.$\,$(\ref{negativity-upper-bound})
for all non-unital channels as there are examples for which $N\left(\Lambda\right)>N\left(\rho_{\Phi^{+},\Lambda}\right)$.
Thus, even though the ordering of negativity may change under one-sided
channel action, $I\otimes\Lambda$ the optimal singlet fraction obeys
the bound in Eq.$\,$(\ref{eq-negativity}) for maximally entangled
input. For unital channels however, as the next lemma shows, we have $N\left(\Lambda\right)=N\left(\rho_{\Phi^{+},\Lambda}\right)$.

\begin{lem}
For unital qubit channels we have  $N\left(\Lambda\right)=N\left(\rho_{\Phi^{+},\Lambda}\right)$
\end{lem}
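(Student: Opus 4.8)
The plan is to show that for a unital qubit channel, the maximum output negativity $N(\Lambda)=\max_\psi N(\rho_{\psi,\Lambda})$ is already attained by the maximally entangled input $|\Phi^+\rangle$. Combined with Lemma~3, this will establish $N(\Lambda)=N(\rho_{\Phi^+,\Lambda})$. First I would recall the structure theorem for unital qubit channels: by \cite{Ruskai-2002}, after suitable local unitaries one may bring $\Lambda$ to the canonical form $\Lambda(X)=\sum_{i=0}^3 t_i\,\sigma_i X\sigma_i$ with $t_i\ge 0$, $\sum t_i=1$ (equivalently, a Pauli channel), since local unitaries on either side leave all negativities invariant. Then $\rho_{\Phi^+,\Lambda}$ is Bell-diagonal with eigenvalues $\{t_i\}$, and its negativity is readily computed from the partial transpose.

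Next I would bound $N(\rho_{\psi,\Lambda})$ from above for an arbitrary pure input $|\psi\rangle$. The cleanest route: $F(\Lambda)\ge F^*(\rho_{\psi,\Lambda})$ holds for every $\psi$ (inequality~(\ref{fidelity-inequalities})), and by Theorem~1, $F(\Lambda)=\lambda_{\max}(\rho_{\Phi^+,\Lambda})$. On the other hand, for a \emph{unital} channel the optimal input $|\psi_0\rangle$ is maximally entangled (Theorem~2), so the chain (\ref{thm-1-eq-2}) gives $F(\Lambda)=F(\rho_{\Phi^+,\Lambda})$ as well. Now I want to compare $N(\rho_{\psi,\Lambda})$ with $N(\rho_{\Phi^+,\Lambda})$ directly. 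The key observation is: applying a local filter $A\otimes I$ to $\rho_{\psi,\Lambda}$ can only be simulated by pre-processing, i.e. $(A\otimes I)\rho_{\psi,\Lambda}(A^\dagger\otimes I)=\rho_{\psi',\Lambda}$ for the filtered input $|\psi'\rangle\propto(A\otimes I)|\psi\rangle$ (this is exactly the computation in the proof of Lemma~1, using trace-preservation). In particular, taking $|\psi\rangle=|\Phi^+\rangle$ and letting $A$ range over all invertible matrices, the filtered states $\rho_{\psi',\Lambda}$ cover all $\rho_{\psi',\Lambda}$ with $|\psi'\rangle$ of full Schmidt rank; and since negativity is an SLOCC-monotone-type quantity, filtering downward from $|\Phi^+\rangle$ to any $|\psi'\rangle$ cannot \emph{increase} the negativity — more precisely, $N(\rho_{\psi',\Lambda})\le N(\rho_{\Phi^+,\Lambda})$ because $\rho_{\psi',\Lambda}$ is a (sub-normalized) local filtering of $\rho_{\Phi^+,\Lambda}$ and negativity is non-increasing under such local operations when normalized appropriately. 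For rank-one (product) inputs the output is separable and $N=0$, so the bound holds trivially. Taking the supremum over $\psi'$ gives $N(\Lambda)\le N(\rho_{\Phi^+,\Lambda})$, and the reverse inequality is immediate since $|\Phi^+\rangle$ is one admissible input.

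The main obstacle I anticipate is making the ``filtering cannot increase negativity'' step fully rigorous: negativity of a \emph{sub-normalized} state and its behaviour under $X\mapsto (A\otimes I)X(A^\dagger\otimes I)$ needs care, because after renormalizing, a local filter can in general \emph{increase} negativity (that is precisely why $F^*$ can exceed $F$). What saves the unital case is that the \emph{starting} state $\rho_{\Phi^+,\Lambda}$ is Bell-diagonal, hence already ``filter-optimal'': one can show directly, using the Pauli-channel form, that among all $\rho_{\psi,\Lambda}$ the Bell-diagonal one maximizes negativity — e.g. by writing $|\psi\rangle=(I\otimes M)|\Phi^+\rangle$ with $M$ not unitary, computing $\rho_{\psi,\Lambda}=\sum_i t_i\,(I\otimes \sigma_i M)|\Phi^+\rangle\langle\Phi^+|(I\otimes M^\dagger\sigma_i)$, and estimating $\lambda_{\min}$ of its partial transpose against that of the Bell-diagonal case. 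This reduces the whole lemma to a finite-dimensional optimization that can be pushed through using $\|M\|\le 1$ (WLOG) and concavity of $\lambda_{\min}(\cdot^\Gamma)$. I would present the argument in the streamlined form: (i) canonical Pauli form via \cite{Ruskai-2002}; (ii) $N(\rho_{\psi,\Lambda})\le N(\rho_{\Phi^+,\Lambda})$ for all $\psi$ via the filtering/optimality argument; (iii) conclude equality, citing Lemma~3 to connect back to $F(\Lambda)$ if desired.
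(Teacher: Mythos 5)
Your skeleton is the same as the paper's: write an arbitrary pure input in Schmidt form, observe (using trace preservation) that every output $\rho_{\psi,\Lambda}$ is a normalized local filtering $(A\otimes I)\rho_{\Phi^{+},\Lambda}(A^{\dagger}\otimes I)/\mathrm{Tr}[(A^{\dagger}A\otimes I)\rho_{\Phi^{+},\Lambda}]$ of the Choi state, and use that for a unital channel $\rho_{\Phi^{+},\Lambda}$ is Bell-diagonal (via \cite{Ruskai-2002}). Up to this point your argument matches the paper's proof of Lemma 4.

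However, the decisive step is exactly the one you flag as the ``main obstacle'' and then do not actually prove: that a Bell-diagonal state is \emph{filter-optimal}, i.e.\ that local filtering cannot increase the negativity of a Bell-diagonal state. You correctly note that the naive monotonicity claim is false for general states (this is precisely why $F^{*}$ can exceed $F$), so the entire content of the lemma rests on this Bell-diagonal fact, and your proposal only gestures at it: ``one can show directly \dots this reduces to a finite-dimensional optimization that can be pushed through using $\|M\|\le 1$ and concavity of $\lambda_{\min}(\cdot^{\Gamma})$.'' As stated this does not work: concavity of $\lambda_{\min}$ compares a mixture with its components, which is of no help in comparing $\lambda_{\min}\bigl(\rho_{\psi,\Lambda}^{\Gamma}\bigr)$ for different \emph{pure} inputs $\psi$ (the family $\{\rho_{\psi,\Lambda}\}_{\psi\ \mathrm{pure}}$ is not convex and all its members are ``extreme'' in the relevant sense), and a norm bound on the filter does not by itself control the renormalization that can inflate negativity. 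The paper closes this gap by invoking the known result of Verstraete, Audenaert, Dehaene and De Moor \cite{Ver01} that the negativity of a Bell-diagonal state cannot be increased by local filtering; to make your proposal complete you must either cite that result or give a genuine proof of it (e.g.\ an explicit computation of $\lambda_{\min}$ of the partially transposed filtered Bell-diagonal state), neither of which is present in the sketch.
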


\begin{proof}
The most general two qubit pure state in the Schmidt form is given
by, $|\alpha\rangle=\sqrt{\lambda}|e_{1}f_{1}\rangle+\sqrt{1-\lambda}|e_{2}f_{2}\rangle=(U\otimes V)(\sqrt{\lambda}|00\rangle+\sqrt{(1-\lambda)}|11\rangle)$,
with $\lambda\in[0,1]$ and the $2\times2$ unitary matrices $U$
and $V$ being given by: $U|0\rangle=|e_{1}\rangle,V|0\rangle=|f_{1}\rangle,U|1\rangle=|e_{2}\rangle$
and $V|1\rangle=|f_{2}\rangle$.

For $\lambda\in[0,1]$, let \begin{equation}
W_{\lambda}=\sqrt{\lambda}|0\rangle\langle0|+\sqrt{(1-\lambda)}|1\rangle\langle1|.\label{Wlambda-filter}\end{equation}

Now using the fact that $\Lambda$ is a trace-preserving map it is
easy to show that, \begin{eqnarray}
\rho_{\alpha,\Lambda} & = & (I\otimes\Lambda)|\alpha\rangle\langle\alpha|\nonumber \\
 & = & \frac{(A_{1}\otimes I)\rho_{\Phi^{+},\Lambda}(A_{1}^{\dagger}\otimes I)}{Tr((A_{1}^{\dagger}A_{1}\otimes I)\rho_{\Phi^{+},\Lambda})}, \label{bell-local-filtering}\end{eqnarray}
 with the filter $A_{1}=UW_{\lambda}V^{T}$ .

For a unital channel $\Lambda$ , $\rho_{\Phi^{+},\Lambda}$ is Bell-diagonal
(see proof of theorem 2). In ref. \cite{Ver01} it was shown that negativity
of a Bell-diagonal state cannot be increased by local filtering. Hence, from eqn. (\ref{bell-local-filtering})
for a unital qubit channel $\Lambda$ we have \begin{equation}
N(\Lambda)=N(\rho_{\Phi^{+},\Lambda}).\end{equation}
This completes the proof of lemma 4. 
\end{proof}

\subsubsection{Example of  channel for which $N(\Lambda) > N(\rho_{\Phi^+,\Lambda})$}

Let us consider the amplitude damping channel, with Kraus operators
$K_{0}=\begin{pmatrix}1 & 0\\
0 & \sqrt{1-p}\end{pmatrix}$ and $K_{1}=\begin{pmatrix}0 & \sqrt{p}\\
0 & 0\end{pmatrix}$ with $1\leq p\leq0$ . The channel is non-unital.

It was shown in \cite{BG-2012} that the optimal input state for attaining
optimal singlet fraction of the channel is given by, $|\chi\rangle=\frac{1}{\sqrt{(2-p)}}|00\rangle+\sqrt{\frac{1-p}{2-p}}|11\rangle$.

Using theorem 1 for the amplitude damping channel $\Lambda$ we therefore
get , $F\left(\Lambda\right)=\lambda_{\max}\left(\rho_{\Phi^{+},\Lambda}\right)=F^{*}(\rho_{\chi,\Lambda})=F(\rho_{\chi,\Lambda})$. Now
from eqn. (\ref{eq-negativity})  we get $F^{*}\left(\rho_{\chi,\Lambda}\right)\leq\frac{1}{2}\left[1+N\left(\rho_{\chi,\Lambda}\right)\right]$,
while from lemma 3  we get $F\left(\Lambda\right)=\frac{1}{2}\left[1+N\left(\rho_{\Phi^{+},\Lambda}\right)\right]$.
Hence we must have, $N\left(\rho_{\Phi^{+},\Lambda}\right)\leq N\left(\rho_{\chi,\Lambda}\right)$.

For the amplitude damping channels for input states $|\phi(\lambda)\rangle=\sqrt{\lambda}|00\rangle+\sqrt{(1-\lambda)}|11\rangle$($\lambda\in[0,1]$)
we have, \begin{equation}
N\left(\rho_{\phi(\lambda),\Lambda}\right)=\sqrt{p^{2}(1-\lambda)^{2}+4\lambda(1-\lambda)(1-p)}-(1-\lambda)p.\end{equation}

Thus, \[
N\left(\rho_{\Phi^{+},\Lambda}\right)=\sqrt{\left(\frac{p^{2}}{4}+1-p\right)}-\frac{p}{2}\]
 and, \[
N\left(\rho_{\phi(\frac{1}{2-p}),\Lambda}\right)=\frac{1-p}{2-p}(\sqrt{p^{2}+4}-p).\]

It is easy to see that $N\left(\rho_{\Phi^{+},\Lambda}\right)<N\left(\rho_{\phi(\frac{1}{2-p}),\Lambda}\right)$
for all $1>p>0$ and hence $N\left(\rho_{\Phi^{+},\Lambda}\right)<N(\Lambda)$.

\section{Nonunital channels and maximally entangled input} It is
important to recognize that theorems 1 and 2 put together only prescribes
a method to attain the optimal singlet fraction. It does not, however,
rule out the possibility that the optimal singlet fraction for a nonunital
channel may still be attained by sending part of a maximally entangled
state followed by local post-processing. As it turns out this is not
the case.

\begin{thm} For a nonunital qubit channel $\Lambda$, \begin{eqnarray}
F^{*}\left(\rho_{\Phi^{+},\Lambda}\right) & < & F\left(\Lambda\right)\label{Thm-3}\end{eqnarray}
 \end{thm}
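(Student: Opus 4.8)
The plan is to argue that for a nonunital channel, the maximally entangled state $|\Phi^+\rangle$ is \emph{not} an optimal input, i.e.\ the chain of inequalities in (\ref{fidelity-inequalities}) cannot become equalities when $|\psi\rangle = |\Phi^+\rangle$. By Theorem 1 we know $F(\Lambda) = \lambda_{\max}(\rho_{\Phi^+,\Lambda}) = F^*(\rho_{\psi_0,\Lambda})$, where $|\psi_0\rangle$ is the (by Theorem 2, \emph{non}maximally entangled) eigenvector of $\rho_{\Phi^+,\hat\Lambda}$ for its largest eigenvalue. So it suffices to show the strict inequality $F^*(\rho_{\Phi^+,\Lambda}) < \lambda_{\max}(\rho_{\Phi^+,\Lambda})$. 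The route I would take is to unwind what equality in the Lemma 1 argument forces. Recall that in the proof of Lemma 1, for the optimal TP-LOCC applied to $\rho_{\psi,\Lambda}$ one gets $F^*(\rho_{\psi,\Lambda}) = p F(\rho_{\psi',\Lambda}) + \tfrac{1-p}{2}$ with $\rho_{\psi',\Lambda} = (I\otimes\Lambda)|\psi'\rangle\langle\psi'|$ and $|\psi'\rangle = \tfrac{1}{\sqrt q}(A\otimes I)|\psi\rangle$, and then $F(\rho_{\psi',\Lambda}) \le F_1(\Lambda) = \lambda_{\max}(\rho_{\Phi^+,\Lambda})$. Since $\lambda_{\max}(\rho_{\Phi^+,\Lambda}) > 1/2$ (entanglement-breaking is excluded), equality $F^*(\rho_{\Phi^+,\Lambda}) = \lambda_{\max}(\rho_{\Phi^+,\Lambda})$ can only hold if $p = 1$ \emph{and} $F(\rho_{\psi',\Lambda}) = \lambda_{\max}(\rho_{\Phi^+,\Lambda}) = F_1(\Lambda)$; that is, the optimal filter must be trivial ($A \propto I$, so $|\psi'\rangle = |\Phi^+\rangle$) and $|\Phi^+\rangle$ must itself be an optimal pre-processing input, i.e.\ $F(\rho_{\Phi^+,\Lambda}) = F_1(\Lambda) = \lambda_{\max}(\rho_{\Phi^+,\hat\Lambda})$.

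So the crux reduces to: for a nonunital qubit channel, is $\langle\Phi^+|\rho_{\Phi^+,\hat\Lambda}|\Phi^+\rangle$ strictly less than $\lambda_{\max}(\rho_{\Phi^+,\hat\Lambda})$? Equivalently (by the identity $\langle\Phi^+| (I\otimes\hat\Lambda)\rho_{\Phi^+} |\Phi^+\rangle = \tfrac14 \sum_i |\mathrm{Tr}\,A_i|^2$ type manipulation, and the fact that $\rho_{\Phi^+,\hat\Lambda}$ has $|\Phi^+\rangle$ as an eigenvector iff the Kraus operators can be chosen with $A_0 \propto I$ and the rest traceless — mirroring the trace-orthogonality argument of Theorem 2), I would show that $|\Phi^+\rangle$ is an eigenvector of $\rho_{\Phi^+,\hat\Lambda}$ achieving its top eigenvalue exactly when the analogous non-unitality-type obstruction for $\hat\Lambda$ vanishes. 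Here one exploits that $\hat\Lambda$ is trace-preserving iff $\Lambda$ is unital (stated in the Preliminaries): if $\Lambda$ is nonunital then $\hat\Lambda$ is not trace-preserving, and I claim this prevents $\rho_{\Phi^+,\hat\Lambda}$ from having $|\Phi^+\rangle$ as a maximal eigenvector. Concretely, if $|\Phi^+\rangle$ were the top eigenvector of $\rho_{\Phi^+,\hat\Lambda}$, then by Lemma 2 / the spectral correspondence of Theorem 2, running the same reasoning would force $|\psi_0\rangle = V|\psi_0'\rangle^*$ with $|\psi_0\rangle$ maximally entangled, whence $|\psi_0'\rangle$ maximally entangled, contradicting Theorem 2. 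Thus $F(\rho_{\Phi^+,\Lambda}) < F_1(\Lambda)$, which by the previous paragraph forces $F^*(\rho_{\Phi^+,\Lambda}) < F(\Lambda)$.

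A cleaner packaging, which I would actually write down: suppose for contradiction $F^*(\rho_{\Phi^+,\Lambda}) = F(\Lambda)$. Then all inequalities in (\ref{fidelity-inequalities}) with $|\psi\rangle = |\Phi^+\rangle$ are equalities, in particular $F(\rho_{\Phi^+,\Lambda}) = F(\Lambda) = F_1(\Lambda)$. But $F(\rho_{\Phi^+,\Lambda}) = \langle\Phi^+|\rho_{\Phi^+,\Lambda}|\Phi^+\rangle$ is just the value of the Rayleigh quotient at $|\Phi^+\rangle$, so equality with $\lambda_{\max}(\rho_{\Phi^+,\Lambda})$ forces $|\Phi^+\rangle$ to be a maximal eigenvector of $\rho_{\Phi^+,\Lambda}$. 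But $\rho_{\Phi^+,\Lambda}$ having $|\Phi^+\rangle$ — a maximally entangled state — as a maximal eigenvector was shown in the proof of Theorem 2 to be impossible for nonunital $\Lambda$ (that argument shows $\rho_{\Phi^+,\Lambda}$ has no maximally entangled eigenvector at all when $\Lambda$ is nonunital). Contradiction; hence $F^*(\rho_{\Phi^+,\Lambda}) < F(\Lambda)$.

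The main obstacle I anticipate is the very first reduction — showing that equality $F^*(\rho_{\Phi^+,\Lambda}) = \lambda_{\max}(\rho_{\Phi^+,\Lambda})$ forces $F(\rho_{\Phi^+,\Lambda})$ itself to equal $\lambda_{\max}(\rho_{\Phi^+,\Lambda})$. The inequality $F^*(\rho_{\psi,\Lambda}) = p F(\rho_{\psi',\Lambda}) + \tfrac{1-p}{2} \le \lambda_{\max}$ with $\lambda_{\max} > 1/2$ does give $p=1$ and $F(\rho_{\psi',\Lambda}) = \lambda_{\max}$ at equality, but one must still check that the optimal filter for $\rho_{\Phi^+,\Lambda}$ being forced to preserve normalization ($p=1$) actually pins it to a unitary (so $|\psi'\rangle$ stays maximally entangled and $F(\rho_{\psi',\Lambda}) = F(\rho_{\Phi^+,\Lambda})$ up to a harmless local unitary); equivalently one can bypass filters entirely by noting $F(\rho_{\Phi^+,\Lambda}) \le F^*(\rho_{\Phi^+,\Lambda}) \le F(\Lambda) = F_1(\Lambda)$ and that $F(\rho_{\Phi^+,\Lambda})$ is a Rayleigh quotient, so equality throughout immediately yields $|\Phi^+\rangle$ as a maximal eigenvector of $\rho_{\Phi^+,\Lambda}$, reducing everything to the Theorem 2 obstruction. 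Everything else is bookkeeping with results already in hand.
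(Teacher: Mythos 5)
Your preferred ``cleaner packaging'' has a genuine logical gap: from the assumption $F^{*}\left(\rho_{\Phi^{+},\Lambda}\right)=F\left(\Lambda\right)$ you conclude that \emph{all} inequalities in (\ref{fidelity-inequalities}) with $|\psi\rangle=|\Phi^{+}\rangle$ become equalities, but only the upper one, $F\left(\Lambda\right)\geq F^{*}\left(\rho_{\Phi^{+},\Lambda}\right)$, is saturated; nothing forces $F\left(\rho_{\Phi^{+},\Lambda}\right)=F^{*}\left(\rho_{\Phi^{+},\Lambda}\right)$, and the possibility that local post-processing strictly raises the singlet fraction of $\rho_{\Phi^{+},\Lambda}$ is exactly what the theorem is about, so it cannot be assumed away. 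The same objection applies to the ``bypass filters entirely'' remark at the end of your proposal. (A smaller slip: $F\left(\rho_{\Phi^{+},\Lambda}\right)$ is the maximum of $\langle\Phi|\rho_{\Phi^{+},\Lambda}|\Phi\rangle$ over \emph{all} maximally entangled $|\Phi\rangle$, not the Rayleigh quotient at $|\Phi^{+}\rangle$ alone; this is harmless, since the Theorem 2 argument excludes any maximally entangled eigenvector of $\rho_{\Phi^{+},\Lambda}$ for nonunital $\Lambda$.)

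Your first route, through the equality analysis of Lemma 1, is sound, and the step you flag can be closed easily: because $\Lambda$ is trace preserving, Alice's marginal of $\rho_{\Phi^{+},\Lambda}$ is $\frac{1}{2}I$, so $p=\mbox{Tr}\left[\left(A^{\dagger}A\otimes I\right)\rho_{\Phi^{+},\Lambda}\right]=\frac{1}{2}\mbox{Tr}\left(A^{\dagger}A\right)$; together with $A^{\dagger}A\leq I$ (a valid local filter), $p=1$ forces $A^{\dagger}A=I$, i.e. $A$ unitary, so $\rho_{1}$ is local-unitarily equivalent to $\rho_{\Phi^{+},\Lambda}$ and the equality condition gives $F\left(\rho_{\Phi^{+},\Lambda}\right)=\lambda_{\max}\left(\rho_{\Phi^{+},\Lambda}\right)$; hence some maximally entangled state attains the top Rayleigh quotient, is therefore a maximal eigenvector of $\rho_{\Phi^{+},\Lambda}$, and this contradicts the nonunitality obstruction established in the proof of Theorem 2. (You also need $\lambda_{\max}\left(\rho_{\Phi^{+},\Lambda}\right)>1/2$, i.e. the same non-entanglement-breaking assumption the paper invokes in Lemma 1, to force $p=1$ in the first place.) With that completion your argument is a genuinely different proof from the paper's: the paper instead bounds $F^{*}\left(\rho_{\Phi^{+},\Lambda}\right)\leq\frac{1}{2}\left[1+N\left(\rho_{\Phi^{+},\Lambda}\right)\right]$, identifies the right-hand side with $F\left(\Lambda\right)$ via Lemma 3, and proves strictness using the semidefinite-program characterization of $F^{*}$ (rank-one optimal $X$, whose partial-transpose constraint forces a maximally entangled $|\alpha\rangle$) together with the criterion of \cite{F<=1/2}, before reaching the same final contradiction with Theorems 1 and 2. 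Your version avoids the negativity and SDP machinery but relies instead on the filter form of the optimal TP-LOCC, so both proofs ultimately rest on \cite{VV-I}, just on different parts of it.
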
 \begin{proof} Using the bound in Eq.$\,$(\ref{eq-negativity})
for the  density matrix $\rho_{\Phi^{+},\Lambda}$ we have \begin{eqnarray}
F^{*}\left(\rho_{\Phi^{+},\Lambda}\right) & \leq & \frac{1}{2}\left[1+N\left(\rho_{\Phi^{+},\Lambda}\right)\right].\label{proof-thm3-eq-1}\end{eqnarray}
 It follows from lemma $3$ that to prove theorem $3$ it suffices to show
that for a nonunital channel $\Lambda$, \begin{eqnarray}
F^{*}\left(\rho_{\Phi^{+},\Lambda}\right) & < & \frac{1}{2}\left[1+N\left(\rho_{\Phi^{+},\Lambda}\right)\right].\label{proof-thm-3-eq-1.1}\end{eqnarray}
 As shown in \cite{VV-I}, for any two qubit density matrix $\rho$
the optimal fidelity $F^{*}(\rho)$ can be found by solving the following
convex semidefinite program:\begin{equation}
\mbox{maximize }\:\: F^{*}=\frac{1}{2}-\mbox{Tr}(X\rho^{\Gamma}),\label{proof-thm-3-eq-2}\end{equation}
 under the constraints \begin{eqnarray}
0 & \leq X\leq & I_{4},\label{constraints-1}\\
-\frac{I_{4}}{2} & \leq X^{\Gamma}\leq & \frac{I_{4}}{2},\label{constraints-2}\end{eqnarray}
 with $X^{\Gamma}$ being the partial transpose of $X$. In addition,
the optimal $X$ is known to be of rank one.

The proof is now by contradiction. Suppose that $F^{*}\left(\rho_{\Phi^{+},\Lambda}\right)=\frac{1}{2}\left[1+N\left(\rho_{\Phi^{+},\Lambda}\right)\right]$;
thus to achieve this equality we must necessarily have,\begin{eqnarray}
\frac{1}{2}-\mbox{Tr}(X_{\mbox{opt}}\rho_{\Phi^{+},\Lambda}^{\Gamma}) & = & \frac{1}{2}\left[1+N\left(\rho_{\Phi^{+},\Lambda}\right)\right],\label{proof-thm-3-eq-3}\end{eqnarray}
 from which it follows that\begin{eqnarray}
\mbox{Tr}(X_{\mbox{opt}}\rho_{\Phi^{+},\Lambda}^{\Gamma}) & = & -\frac{N\left(\rho_{\Phi^{+},\Lambda}\right)}{2}\nonumber \\
 & = & \lambda_{\min}\left(\rho_{\Phi^{+},\Lambda}^{\Gamma}\right).\label{proof-thm-3-eq-4}\end{eqnarray}
 Using the facts that $X_{\mbox{opt}}$ is a positive rank one operator
(proved in \cite{VV-I}) and there is only one negative eigenvalue
for $\rho_{\Phi^{+},\Lambda}^{\Gamma}$ (which means $\lambda_{\min}$
is negative), we obtain\begin{equation}
X_{\mbox{opt}}=|\alpha\rangle\langle\alpha|,\label{proof-thm-3-eq-5}\end{equation}
 where $\rho^{\Gamma}|\alpha\rangle=\lambda_{\mbox{min}}(\rho^{\Gamma})|\alpha\rangle$.
Clearly $X_{\mbox{opt}}$ in the above eqn. is of rank one and satisfies
$0\leq X\leq I_{4}$. As eigenvalues of $X$ and $X^{\Gamma}$ are
invariant under local unitaries it is sufficient to take , \begin{equation}
X=\mathbf{P}(\sqrt{\lambda}|00\rangle+\sqrt{(1-\lambda)}|11\rangle),\label{X}\end{equation}
 with $\mathbf{P}(|a\rangle)$ denoting projector on $|a\rangle$ and $\lambda \in (0,1)$. 
.

The spectrum of $X^{\Gamma}$ for $X$ in Eq.$\,$(\ref{X}) is given
by , \begin{equation}
\lambda(X^{\Gamma})=\lambda,(1-\lambda),\pm\sqrt{\lambda(1-\lambda)}.\label{proof-thm-3-eq-6}\end{equation}
 Thus the constraint (\ref{constraints-2}) is only satisfied for
$\lambda=\frac{1}{2}$ , i.e, if $|\alpha\rangle$ is maximally entangled.
Therefore, under the assumption $F^{*}\left(\rho_{\Phi^{+},\Lambda}\right)=\frac{1}{2}\left[1+N\left(\rho_{\Phi^{+},\Lambda}\right)\right]$,
the eigenvector $|\alpha\rangle$ corresponding to the negative eigenvalue
$\lambda_{\min}\left(\rho_{\Phi^{+},\Lambda}^{\Gamma}\right)$ is
maximally entangled.

But then this implies that \begin{eqnarray}
F\left(\rho_{\Phi^{+},\Lambda}\right) & = & \frac{1}{2}\left[1+N\left(\rho_{\Phi^{+},\Lambda}\right)\right]=\mbox{\ensuremath{\lambda}}_{\max}\left(\rho_{\Phi^{+},\Lambda}\right)\label{proof-thm-3-eq-7}\end{eqnarray}
 because for any two qubit entangled density matrix $\sigma$, $F\left(\sigma\right)=\frac{1}{2}\left[1+N\left(\sigma\right)\right]$
if and only if the eigenvector corresponding to the negative eigenvalue
of $\sigma^{\Gamma}$ is maximally entangled \cite{F<=1/2}.The last
equality in eqn. (\ref{proof-thm-3-eq-7}) follows from eqn. \ref{eqn-24}.

Now from theorem 1 we have, \begin{equation}
F\left(\Lambda\right)=F\left(\rho_{\psi_{0},\Lambda}\right)=\lambda_{\max}\left(\rho_{\Phi^{+},\Lambda}\right)\label{proof-thm-3-eq-8}\end{equation}
 where $|\psi_{0}\rangle$ is the eigenvector corresponding to the
maximum eigenvalue of $\rho_{\Phi^{+},\hat{\Lambda}}$. Now from Theorem
2 we know that $|\psi_{0}\rangle$ is necessarily non-maximally entangled
when the channel $\Lambda$ is nonunital. Thus for a nonunital channel
$\Lambda$, \begin{eqnarray}
F(\rho_{\Phi^{+},\Lambda}) & < & F\left(\Lambda\right)=\lambda_{\max}\left(\rho_{\Phi^{+},\Lambda}\right)\label{proof-3-eq-9}\end{eqnarray}
 which contradicts Eq.$\,$(\ref{proof-thm-3-eq-7}). \end{proof} This completes the proof of theorem 3. 

\section{Conclusions} Shared entanglement is a critical resource
for quantum information processing tasks such as quantum teleportation.
Typically, quantum entanglement is shared by sending part of a pure
entangled state through a quantum channel which, in practice is noisy.
This results in mixed entangled states, purity of which is characterized
by singlet fraction. Because faithful implementation of quantum information
processing tasks require near-perfect entangled states (states with
very high purity), a basic question is: What is the optimal singlet
fraction attainable for a single use of a quantum channel $\Lambda$
and trace-preserving local operations? 

In this paper, we obtained an exact expression of the optimal singlet
fraction for a qubit channel and prescribed a protocol to attain the
optimal value. The protocol consists of sending part of a pure entangled
state $\vert\psi_{0}\rangle$ through the channel, where $\vert\psi_{0}\rangle$
is given by the eigenvector corresponding to the maximum eigenvalue
of the density matrix $\rho_{\Phi^{+},\hat{\Lambda}}$ ($\hat{\Lambda}$
is the  channel dual to the qubit channel $\Lambda$). We have also shown that this {}``best'' state
$\vert\psi_{0}\rangle$ is maximally entangled for unital channels
but must be nonmaximally entangled if the channel is nonunital. Interestingly,
we find that in the optimal case no local post-processing is required
even though it is known that TP LOCC can increase singlet fraction
of a density matrix. 

We would also like to mention that recent results \cite{Bruss-et-al-2012,Chuan-et-al-2012,Kay-2012}
have shown that generalized quantum correlations play an essential
role in distribution of entanglement via separable states. In this
setting, the carrier, which always remains separable with the rest
of the system, is transmitted through a noiseless quantum channel,
whereas in practice channels are noisy. We therefore expect our results
to be useful in a more general treatment of the aforementioned scheme
of entanglement distribution involving noisy quantum channels.

\begin{acknowledgments}
R. Pal wants to thank Ajit Iqbal Singh for useful discussions. 
\end{acknowledgments}

\section{Appendix}
\subsection*{A. Technical Lemma }

\begin{lem}  $\lambda_{\max}(\rho_{\Phi^{+},\hat{\Lambda}})=\lambda_{\max}(\rho_{\Phi^{+},\Lambda})$
\end{lem}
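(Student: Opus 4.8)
The plan is to exhibit $\rho_{\Phi^{+},\hat{\Lambda}}$ as the image of $\rho_{\Phi^{+},\Lambda}$ under a unitary conjugation followed by entrywise complex conjugation. Each of these operations leaves the spectrum of a Hermitian operator invariant, so the two density matrices are isospectral and in particular share the same largest eigenvalue. As a by-product this yields the operator identity
\begin{equation}
\rho_{\Phi^{+},\hat{\Lambda}}=\bigl(V^{\dagger}\rho_{\Phi^{+},\Lambda}V\bigr)^{*},\label{proof-lem-4}
\end{equation}
with $V$ the swap, which is used in the proof of Lemma 2.

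First I would fix a Kraus decomposition $\{A_{i}\}$ of $\Lambda$, so that $\hat{\Lambda}$ has Kraus operators $\{A_{i}^{\dagger}\}$, and write out $\rho_{\Phi^{+},\Lambda}=\sum_{i}(I\otimes A_{i})|\Phi^{+}\rangle\langle\Phi^{+}|(I\otimes A_{i}^{\dagger})$ and $\rho_{\Phi^{+},\hat{\Lambda}}=\sum_{i}(I\otimes A_{i}^{\dagger})|\Phi^{+}\rangle\langle\Phi^{+}|(I\otimes A_{i})$. Next I would apply the ``ricochet'' identity $(I\otimes O)|\Phi^{+}\rangle=(O^{T}\otimes I)|\Phi^{+}\rangle$ (valid for every $2\times2$ matrix $O$) on both the ket and the bra sides of $\rho_{\Phi^{+},\hat{\Lambda}}$, using $(A_{i}^{\dagger})^{T}=A_{i}^{*}$ on the ket side and the adjoint version on the bra side, together with the fact that $|\Phi^{+}\rangle\langle\Phi^{+}|$ has real matrix entries. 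This rewrites $\rho_{\Phi^{+},\hat{\Lambda}}$ as $\bigl(\sum_{i}(A_{i}\otimes I)|\Phi^{+}\rangle\langle\Phi^{+}|(A_{i}^{\dagger}\otimes I)\bigr)^{*}=\bigl((\Lambda\otimes I)|\Phi^{+}\rangle\langle\Phi^{+}|\bigr)^{*}$. Finally, since $V|\Phi^{+}\rangle=|\Phi^{+}\rangle$, $V(I\otimes A)V=A\otimes I$ and $V^{\dagger}=V$, one gets $(\Lambda\otimes I)|\Phi^{+}\rangle\langle\Phi^{+}|=V^{\dagger}\rho_{\Phi^{+},\Lambda}V$, which is Eq.\,(\ref{proof-lem-4}).

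To conclude, $V^{\dagger}\rho_{\Phi^{+},\Lambda}V$ is isospectral with $\rho_{\Phi^{+},\Lambda}$ because $V$ is unitary, and for any Hermitian $M$ the entrywise conjugate $M^{*}$ has eigenvalues $\overline{\lambda_{j}(M)}=\lambda_{j}(M)$, the $\lambda_{j}(M)$ being real, so $M^{*}$ is also isospectral with $M$. Hence $\rho_{\Phi^{+},\hat{\Lambda}}$ and $\rho_{\Phi^{+},\Lambda}$ have the same spectrum, giving $\lambda_{\max}(\rho_{\Phi^{+},\hat{\Lambda}})=\lambda_{\max}(\rho_{\Phi^{+},\Lambda})$. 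I expect the only real obstacle to be the transpose/conjugate/adjoint bookkeeping in the middle step: applying the ricochet identity on the correct tensor factor on each side and tracking the identities $(A_{i}^{\dagger})^{T}=A_{i}^{*}$ and $\overline{A_{i}^{T}}=A_{i}^{\dagger}$; everything else is routine linear algebra. (It is worth remarking in passing that $\rho_{\Phi^{+},\hat{\Lambda}}$ is a bona fide density matrix even though $\hat{\Lambda}$ need not be trace preserving, since $\mathrm{Tr}\,\rho_{\Phi^{+},\hat{\Lambda}}=\frac{1}{2}\mathrm{Tr}\sum_{i}A_{i}A_{i}^{\dagger}=\frac{1}{2}\mathrm{Tr}\,I=1$ by cyclicity of the trace, so comparing spectra is legitimate.)
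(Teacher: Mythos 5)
Your proposal is correct and follows essentially the same route as the paper's own proof: both establish the identity $\rho_{\Phi^{+},\hat{\Lambda}}=\bigl(V^{\dagger}\rho_{\Phi^{+},\Lambda}V\bigr)^{*}$ via the ricochet identity applied to the Kraus form, followed by entrywise conjugation and the swap, and then conclude isospectrality from invariance of the spectrum under unitary conjugation and complex conjugation of a Hermitian matrix. The closing remark on $\mathrm{Tr}\,\rho_{\Phi^{+},\hat{\Lambda}}=1$ matches the paper's own observation that this is a valid state even when $\hat{\Lambda}$ is not trace preserving.
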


Proof. We first obtain a relationship between the states $\rho_{\Phi^{+},\Lambda}$
and $\rho_{\Phi^{+},\hat{\Lambda}}$. Recall that these states are
given by \begin{equation}
\rho_{\Phi^{+},\Lambda}=\sum_{i}(I\otimes A_{i})|\Phi^{+}\rangle\langle\Phi^{+}|(I\otimes A_{i}^{\dagger}).\label{Choi}\end{equation}
 \begin{equation}
\rho_{\Phi^{+},\hat{\Lambda}}=\sum_{i}(I\otimes A_{i}^{\dagger})|\Phi^{+}\rangle\langle\Phi^{+}|(I\otimes A_{i}).\label{Choi-dual}\end{equation}
 Eqn. (\ref{Choi-dual}) can be written as,

\begin{eqnarray}
\rho_{\Phi^{+},\hat{\Lambda}} & = & \sum_{i}((A_{i}^{\dagger})^{T}\otimes I)|\Phi^{+}\rangle\langle\Phi^{+}|(A_{i}^{T}\otimes I)\nonumber \\
\implies\rho_{\Phi^{+},\hat{\Lambda}}^{*} & = & \sum_{i}(A_{i}\otimes I)|\Phi^{+}\rangle\langle\Phi^{+}|(A_{i}^{\dagger}\otimes I),\end{eqnarray}
 where the complex conjugation is taken with respect to the computational
basis $\{|00\rangle,|01\rangle,|10\rangle,|11\rangle\}$. Now using
the SWAP operator V defined by the action $V|ij\rangle=|ji\rangle$,
we have \begin{eqnarray}
(A_{i}\otimes I)|\Phi^{+}\rangle & = & \frac{1}{\sqrt{2}}\sum_{k=0}^{1}A_{i}|k\rangle\otimes|k\rangle\:\:\mbox{ and so,}\nonumber \\
V(A_{i}\otimes I)|\Phi^{+}\rangle & = & \frac{1}{\sqrt{2}}\sum_{k=0}^{1}|k\rangle\otimes A_{i}|k\rangle\nonumber \\
 & = & (I\otimes A_{i})|\Phi^{+}\rangle.\end{eqnarray}
 Hence, \begin{eqnarray}
\rho_{\Phi^{+},\hat{\Lambda}}^{*} & = & V^{\dagger}\rho_{\Phi^{+},\Lambda}V,\nonumber \\
\implies\rho_{\Phi^{+},\hat{\Lambda}} & = & {(V^{\dagger}\rho_{\Phi^{+},\Lambda}V)}^{*}.\label{proof-lem-4}\end{eqnarray}
 From the above equation it therefore follows that \begin{equation}
\lambda_{\max}(\rho_{\Phi^{+},\hat{\Lambda}})=\lambda_{\max}(\rho_{\Phi^{+},\Lambda}).\end{equation}

Note that lemma 5 does not assume that $\Lambda$ is a  qubit channel. Also, from eqn. (\ref{proof-lem-4}) it is clear that $\rho_{\Phi^{+},\hat{\Lambda}}$ is a valid state even for a non-unital channel $\Lambda$ for which the dual channel $\hat{\Lambda}$
is not trace preserving. But we will get unnormalized states if the dual channel acts on one side of some non-maximally entangled states.     

\subsection*{B. Technical Lemma}

\begin{lem}  Let $\sigma_{AB}\in\mathbb{C}^{2}\otimes\mathbb{C}^{2}$
be a bipartite density matrix such that $\mbox{Tr}_{B}\left(\sigma_{AB}\right)=\frac{1}{2}I$.
Then, \begin{eqnarray}
\lambda_{\min}\left(\sigma_{AB}^{\Gamma}\right)+\lambda_{\max}\left(\sigma_{AB}\right) & = & \frac{1}{2}\label{lem-2-eq-1}\end{eqnarray}
 where $\lambda_{\min}\left(X\right)$ and $\lambda_{\max}\left(X\right)$
denote the minimum and maximum eigenvalue of $X\in\left\{ \sigma_{AB},\sigma_{AB}^{\Gamma}\right\} $
and $\Gamma$ denotes partial transposition. \end{lem}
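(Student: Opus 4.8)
The plan is to show that, under the hypothesis $\mbox{Tr}_{B}(\sigma_{AB})=\tfrac12 I$, the partial transpose $\sigma_{AB}^{\Gamma}$ is \emph{unitarily equivalent} to $\tfrac12 I_{4}-\sigma_{AB}$. Once this is established, the eigenvalues of $\sigma_{AB}^{\Gamma}$ are exactly $\tfrac12-\mu$ as $\mu$ ranges over the eigenvalues of $\sigma_{AB}$; taking the smallest of these and pairing it with the largest eigenvalue of $\sigma_{AB}$ gives $\lambda_{\min}(\sigma_{AB}^{\Gamma})=\tfrac12-\lambda_{\max}(\sigma_{AB})$, which is the assertion.

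The key ingredient is the elementary identity $M^{T}=\sigma_{y}\big((\mbox{tr}\,M)I-M\big)\sigma_{y}$, valid for every $2\times2$ matrix $M$; I would verify it by noting that both sides are linear in $M$ and agree on the basis $\{I,\sigma_{x},\sigma_{y},\sigma_{z}\}$, using $\mbox{tr}\,\sigma_{j}=0$ and $\sigma_{y}\sigma_{j}\sigma_{y}=-\sigma_{j}$ for $j=x,y,z$. I would then lift this to the partial transpose on $B$: writing an arbitrary two-qubit operator as $O=\sum_{\alpha}X_{\alpha}\otimes Y_{\alpha}$ and applying the identity on the $B$ factor of each term gives $O^{\Gamma}=(I\otimes\sigma_{y})\big((\mbox{Tr}_{B}O)\otimes I-O\big)(I\otimes\sigma_{y})$, since $\sum_{\alpha}(\mbox{tr}\,Y_{\alpha})X_{\alpha}=\mbox{Tr}_{B}O$. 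Substituting $O=\sigma_{AB}$ and using the hypothesis $\mbox{Tr}_{B}(\sigma_{AB})=\tfrac12 I$ turns $(\mbox{Tr}_{B}\sigma_{AB})\otimes I$ into $\tfrac12 I_{4}$, so that
\begin{equation}
\sigma_{AB}^{\Gamma}=(I\otimes\sigma_{y})\Big(\tfrac12 I_{4}-\sigma_{AB}\Big)(I\otimes\sigma_{y}).
\end{equation}
Since $I\otimes\sigma_{y}$ is unitary, conjugation preserves the spectrum, and the spectrum of $\tfrac12 I_{4}-\sigma_{AB}$ is $\{\tfrac12-\mu:\mu\ \text{an eigenvalue of}\ \sigma_{AB}\}$, which completes the argument.

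I do not expect a real obstacle here; the only points needing care are keeping the signs in the $\sigma_{y}$-conjugation identity straight and checking the bookkeeping that the marginal hypothesis is precisely what collapses $(\mbox{Tr}_{B}\sigma_{AB})\otimes I$ to a multiple of the identity. As an independent sanity check one can write $\sigma_{AB}$ in Bloch (Fano) form $\tfrac14(I\otimes I+I\otimes\vec b\cdot\vec\sigma+\sum_{ij}t_{ij}\,\sigma_{i}\otimes\sigma_{j})$ — the $A$-Bloch vector vanishing exactly because $\mbox{Tr}_{B}\sigma_{AB}=\tfrac12 I$ — and compare the two sides of the displayed equation term by term, using $\sigma_{y}^{T}=-\sigma_{y}$ and $\sigma_{x}^{T}=\sigma_{x}$, $\sigma_{z}^{T}=\sigma_{z}$. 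Finally, although the statement writes $\Gamma$ without specifying the subsystem, $\sigma^{\Gamma_{A}}$ and $\sigma^{\Gamma_{B}}$ share the same spectrum (they differ by a global transpose), so the choice is immaterial and the hypothesis on $\mbox{Tr}_{B}$ is the one that pairs naturally with partial transposition on $B$. Note also that the computation actually yields the full spectral correspondence $\mu\leftrightarrow\tfrac12-\mu$, of which the stated identity is the extremal instance; this is what makes it applicable to $\rho_{\Phi^{+},\Lambda}$, whose $A$-marginal is $\tfrac12 I$ because $\Lambda$ is trace preserving.
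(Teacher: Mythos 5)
Your argument is correct, and it takes a genuinely different and more economical route than the paper. The paper proves this lemma by invoking the Choi--Jamio{\l}kowski isomorphism to view $\sigma_{AB}$ as the Choi state of a trace-preserving qubit map, reducing to the Ruskai--Szarek--Werner canonical form (diagonal $\lambda_i$'s and translations $t_i$ up to local unitaries), writing out the resulting $4\times4$ matrix explicitly, and then showing by row/column interchanges and sign flips that the characteristic polynomial of $\sigma^{\Gamma}$ is that of $\sigma$ evaluated at $\tfrac12-\lambda$. Your proof replaces all of this with the single-qubit identity $\sigma_{y}M^{T}\sigma_{y}=(\mbox{Tr}\,M)I-M$, lifted to the partial transpose to give the exact operator identity $\sigma_{AB}^{\Gamma_{B}}=(I\otimes\sigma_{y})\bigl(\tfrac12 I_{4}-\sigma_{AB}\bigr)(I\otimes\sigma_{y})$ once $\mbox{Tr}_{B}\sigma_{AB}=\tfrac12 I$ is imposed; this is basis-free, needs no canonical form or determinant bookkeeping, and yields not just the extremal identity but an explicit unitary equivalence realizing the full spectral correspondence $\mu\leftrightarrow\tfrac12-\mu$ (the paper's determinant manipulation also gives the full correspondence, but only at the level of characteristic polynomials). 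Your closing remarks are the right ones to make: the hypothesis on $\mbox{Tr}_{B}$ is what pairs with transposition on $B$, and since the paper transposes the first party, the observation that $\sigma^{\Gamma_{A}}$ and $\sigma^{\Gamma_{B}}$ are related by a global transpose and hence isospectral is needed and correctly supplied; likewise the check that $\rho_{\Phi^{+},\Lambda}$ satisfies the marginal hypothesis because $\Lambda$ is trace preserving is exactly how the lemma gets applied in the main text.
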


Proof. Let $\sigma_{AB}\in\mathbb{C}^{2}\otimes\mathbb{C}^{2}$ be
a bipartite density matrix such that $\mbox{Tr}_{B}\left(\sigma_{AB}\right)=\frac{1}{2}I$.
From the Choi-Jamiolkowski isomorphism (\cite{JAM72}, \cite{CH75})
we have that $\sigma_{AB}$ can be written as , \[
\sigma_{AB}=(I\otimes\Lambda)\left({|\Phi^{+}\rangle_{AB}\langle\Phi^{+}|}\right),\]
 where $\Lambda$ is trace preserving completely positive map(TPCP),
mapping $\mathcal{B}(\mathcal{C}^{2})$ to itself.

In \cite{Ruskai-2002} it was shown that any such map $\Lambda$ can
be written as, \begin{equation}
\Lambda(\rho)=U_{1}\circ\Lambda'\circ U_{2}(\rho)\label{map-composition}\end{equation}
 with $\Lambda'$ being a canonical TPCP map and $U_{1}$ and $U_{2}$
being unitary maps. If $\rho=\frac{1}{2}(I+x\sigma_{1}+y\sigma_{2}+z\sigma_{3})$
and $\rho'=\Lambda'(\rho)=\frac{1}{2}(I+x'\sigma_{1}+y'\sigma_{2}+z'\sigma_{3})$
then in the Bloch sphere representation the map $\Lambda'$ is given
by, \begin{equation}
\left[\begin{array}{c}
1\\
x'\\
y'\\
z'\end{array}\right]=\begin{bmatrix}1 & 0 & 0 & 0\\
t_{1} & \lambda_{1} & 0 & 0\\
t_{2} & 0 & \lambda_{2} & 0\\
t_{3} & 0 & 0 & \lambda_{3}\end{bmatrix}\left[\begin{array}{c}
1\\
x\\
y\\
z\end{array}\right],\label{map-canonical}\end{equation}
 with $t_{i}$ and $\lambda_{i}$ being real for all $i$.

Now as local unitaries do not affect the eigenvalues of $\sigma_{AB}$
or $\sigma_{AB}^{\Gamma}$ , for the rest of the proof we can focus
on $(I\otimes\Lambda')(|\Phi^{+}\rangle\langle\Phi^{+}|)=\rho_{\Phi^+,\Lambda'}$
with the map $\Lambda'$ given by eqn. (\ref{map-canonical}) . We
have,

\begin{equation}
\rho_{\Phi^+,\Lambda'}=\frac{1}{2}\begin{bmatrix}a & b & 0 & d\\
b^* & (1-a) & f & 0\\
0 & f & c & b\\
d & 0 & b^* & (1-c)\end{bmatrix}\end{equation}
 , with $a=\frac{1+t_{3}+\lambda_{3}}{2}$, $b=\frac{t_{1}-it_{2}}{2}$,
$d=\frac{(\lambda_{1}+\lambda_{2})}{2}$, $f=\frac{(\lambda_{1}-\lambda_{2})}{2}$,
$c=\frac{(1+t_{3}-\lambda_{3})}{2}$ . Now complete positivity of
$\Lambda'$ implies positivity of $\rho_{\Phi^+,\Lambda'}$ and hence the
spectrum of $\rho_{\Phi^+,\Lambda'}$ is same as that of $\rho_{\Phi^+,\Lambda'}^{*}$
. Now the eigenvalue equation of $\rho_{\Phi^+,\Lambda'}^{*}$ is \begin{equation}
\begin{vmatrix}(\frac{a}{2}-\lambda) & \frac{b^{*}}{2} & 0 & \frac{d}{2}\\
\frac{b}{2} & (\frac{1-a}{2}-\lambda) & \frac{f}{2} & 0\\
0 & \frac{f}{2} & (\frac{c}{2}-\lambda) & \frac{b^{*}}{2}\\
\frac{d}{2} & 0 & \frac{b}{2} & (\frac{(1-c)}{2}-\lambda)\end{vmatrix}=0.\label{eig-rho*}\end{equation}

Now, the partial transpose w.r.t first party of $\rho_{\Phi^+,\Lambda'}$
is given by, \begin{equation}
\rho_{\Phi^+,\Lambda'}^{\Gamma}=\frac{1}{2}\begin{bmatrix}a & b & 0 & f\\
b^{*} & (1-a) & d & 0\\
0 & d & c & b\\
f & 0 & b^{*} & (1-c)\end{bmatrix}.\end{equation}
 The eigenvalue equation of $\rho_{\Phi^+,\Lambda'}^{\Gamma}$ is given by,
\begin{equation}
\begin{vmatrix}(\frac{a}{2}-\lambda) & \frac{b}{2} & 0 & \frac{f}{2}\\
\frac{b^{*}}{2} & (\frac{(1-a)}{2}-\lambda) & \frac{d}{2} & 0\\
0 & \frac{d}{2} & (\frac{c}{2}-\lambda) & \frac{b}{2}\\
\frac{f}{2} & 0 & \frac{b^{*}}{2} & (\frac{(1-c)}{2}-\lambda)\end{vmatrix}=0.\label{eig-rhop}\end{equation}

Replacing $\lambda$ by $(\frac{1}{2}-\lambda')$, in eqn. (\ref{eig-rhop})
we have,

\begin{equation}
\begin{vmatrix}-(\frac{(1-a)}{2}-\lambda') & \frac{b}{2} & 0 & \frac{f}{2}\\
\frac{b^{*}}{2} & -(\frac{a}{2}-\lambda') & \frac{d}{2} & 0\\
0 & \frac{d}{2} & -(\frac{(1-c)}{2}-\lambda') & \frac{b}{2}\\
\frac{f}{2} & 0 & \frac{b^{*}}{2} & -(\frac{c}{2}-\lambda')\end{vmatrix}=0.\label{eig-rhorepl}\end{equation}

In eqn. (\ref{eig-rhorepl}) performing the interchanges, column 1 $\Leftrightarrow$
column 2 and column 3 $\Leftrightarrow$ column 4 we have, \begin{equation}
\begin{vmatrix}\frac{b}{2} & -(\frac{(1-a)}{2}-\lambda') & \frac{f}{2} & 0\\
-(\frac{a}{2}-\lambda') & \frac{b^{*}}{2} & 0 & \frac{d}{2}\\
\frac{d}{2} & 0 & \frac{b}{2} & -(\frac{(1-c)}{2}-\lambda')\\
0 & \frac{f}{2} & -(\frac{c}{2}-\lambda') & \frac{b^{*}}{2}\end{vmatrix}=0.\label{eig-columnint}\end{equation}

In eqn. (\ref{eig-columnint}) performing the interchanges,  row 1 $\Leftrightarrow$
row 2 and row 3 $\Leftrightarrow$ row 4 we have, \begin{equation}
\begin{vmatrix}-(\frac{a}{2}-\lambda') & \frac{b^{*}}{2} & 0 & \frac{d}{2}\\
\frac{b}{2} & -(\frac{(1-a)}{2}-\lambda') & \frac{f}{2} & 0\\
0 & \frac{f}{2} & -(\frac{c}{2}-\lambda') & \frac{b^{*}}{2}\\
\frac{d}{2} & 0 & \frac{b}{2} & -(\frac{(1-c)}{2}-\lambda')\end{vmatrix}=0.\label{eig-rowint}\end{equation}
 Now multiplying the 1st row by -1, 2nd column by -1, 3rd row by -1
and 4th column by -1 successively in eqn. (\ref{eig-rowint}) we get
back eqn. (\ref{eig-rho*}) . Thus if eigenvalues of $\rho_{\Phi^+,\Lambda'}$
are $\lambda_{i}$ with $i=1,2,3,4$, that of $\rho_{\Phi^+,\Lambda'}^{\Gamma}$
are $(\frac{1}{2}-\lambda_{i})$. Thus we have, \begin{eqnarray}
\lambda_{\min}(\rho_{\Phi^+,\Lambda'}^{\Gamma}) & = & \frac{1}{2}-\lambda_{\max}(\rho_{\Phi^+,\Lambda'})\nonumber \\
\Rightarrow\lambda_{\min}(\rho_{\Phi^+,\Lambda'}^{\Gamma})+\lambda_{\max}(\rho_{\Phi^+,\Lambda'}) & = & \frac{1}{2}\nonumber \\
\Rightarrow\lambda_{\min}(\sigma_{AB}^{\Gamma})+\lambda_{\max}(\sigma_{AB}) & = & \frac{1}{2}.\end{eqnarray}

\end{document}